\newcounter{mynum} \numberwithin{mynum}{section}
\newtheorem{theorem}[mynum]{\em Theorem}
\newtheorem{proposition}[mynum]{\em Proposition}
\newtheorem{conjecture}[mynum]{\em Conjecture}
\newtheorem{definition}[mynum]{\em Definition}
\newtheorem{lemma}[mynum]{\em Lemma}
\newtheorem{corollary}[mynum]{\em Corollary}
\newtheorem{claim}[mynum]{\em Claim}
\newtheorem{question}[mynum]{\em Question}
\newtheorem{observation}[mynum]{\em Observation}
\journal{Sample Journal}
\begin{document}

\begin{frontmatter}

\title{Vizing-Goldberg type bounds for the equitable chromatic number \\of  block graphs\footnote{The work of the second author has been partially supported by Narodowe Centrum Nauki under contract 2017/01/X/ST6/00148}$^{,}$\footnote{The work of the third author has been partially supported by the Italian MIUR PRIN 2017 Project ALGADIMAR ``Algorithms, Games, and Digital Markets.''}}

\author[label1]{Janusz Dybizba\'nski}
\address[label1]{Institute of Informatics, Faculty of Mathematics, Physics and Informatics,\\University of Gda\'nsk, Wita  Stwosza 57, Gda\'nsk, Poland}

\cortext[cor1]{Corresponding author}

\ead{jdybiz@inf.ug.edu.pl}

\author[label1]{Hanna Furma\'nczyk\corref{cor1}}
\ead{hanna.furmanczyk@ug.edu.pl}

\author[label4,label5]{Vahan Mkrtchyan}
\address[label4]{Dipartimento di Informatica, Universita degli Studi di Verona, Verona, Italy}
\address[label5]{Gran Sasso Science Institute, L'Aquila, Italy}
\ead{vahan.mkrtchyan@gssi.it}

\begin{abstract}
An equitable coloring of a graph $G$ is a proper vertex coloring of $G$ such that the sizes of any two color classes differ by at most one. In the paper, we pose a conjecture that offers a gap-one bound for the smallest number of colors needed to equitably color every block graph. In other words, the difference between the upper and the lower bounds of our conjecture is at most one. Thus, in some sense, the situation is similar to that of chromatic index, where we have the classical theorem of Vizing and the Goldberg conjecture for multigraphs. The results obtained in the paper support our conjecture. More precisely, we verify it in the class of well-covered block graphs, which are block graphs in which each vertex belongs to a maximum independent set. We also show that the conjecture is true for block graphs, which contain a vertex that does not lie in an independent set of size larger than two. Finally, we verify the conjecture for some symmetric-like block graphs. In order to derive our results we obtain structural characterizations of block graphs from these classes.  

\end{abstract}

\begin{keyword}
block-graph \sep equitable coloring \sep chromatic spectrum \sep well-covered block graph \sep linear hypertree \sep symmetric-like block graph \sep EFL conjecture
\MSC 05C15 
\end{keyword}

\end{frontmatter}



\section{Introduction}
A \emph{hypergraph} is a pair $\mathcal{H}=(V,\mathbb{E})$, where $V$ is an $n$ element set of vertices of $H$ and $\mathbb{E}$ 
is a family of $m$ non-empty subsets of $V$ called edges or \emph{hyperedges}.
Let $[k]$ denote the set of integers $\{1,\ldots,k\}$.
A \emph{$k$-coloring} of hyperedges of $\mathcal{H}=(V,\mathbb{E})$ is a mapping $c : \mathbb{E} \rightarrow [k]$ such that no two edges that
share a vertex get the same color (number). An edge $k$-coloring of $\mathcal{H}=(V,\mathbb{E})$ is \emph{equitable} if each color class is of size $\lceil m/k \rceil$ or $\lfloor m/k \rfloor$. In other words, an equitable edge coloring of $\mathcal{H}=(V,\mathbb{E})$ may be seen as a partition of the hyperedge set $\mathbb{E}$ into independent subsets $E_1, \ldots, E_k$ such that $||E_i|-|E_j||\leq 1$, for each $i,j \in [k]$. The smallest 
$k$ such that $\mathcal{H}$ admits an equitable edge $k$-coloring is called the \emph{equitable chromatic index} and is denoted by $\chi'_=(\mathcal{H})$.

For a hypergraph $\mathcal{H}$ we need to define the concept of its line graph/host graph (cf. Fig.~\ref{fig:host}).
The \emph{line graph} $L(\mathcal{H})$ is a simple graph representing adjacencies between hyperedges in $\mathcal{H}$. More
precisely, each hyperedge of $\mathcal{H}$ is assigned a vertex in $L(\mathcal{H})$ and two vertices in $L(\mathcal{H})$ are adjacent if and only if their
corresponding hyperedges share a vertex in $\mathcal{H}$.
We say that a hypergraph $\mathcal{H}$ has an \emph{underlying (host)} graph $G$ (spanned on the same set of vertices) if each hyperedge of $\mathcal{H}$ induces a connected subgraph in $G$. Furthermore, it is assumed that for each 
edge $e_G$ in $G$ there exists a hyperedge 
$e_{\mathcal{H}}$ in $\mathcal{H}$ such that $e_G \subseteq e_{\mathcal{H}}$.

It is easy to notice that an edge coloring of a hypergraph is equivalent to
a vertex coloring of its line graph. A $k$-coloring of vertices of a simple graph $G=(V,E)$ is an assigning of colors from the set $[k]$ to vertices in such a way that no two adjacent vertices receive the same color. A vertex $k$-coloring is \emph{equitable} if each color class is of size $\lceil|V|/k\rceil$ or $\lfloor|V|/k\rfloor$.
The smallest $k$ such that $G$ admits an equitable vertex coloring is called the \emph{equitable chromatic number} of $G$ and is denoted by $\chi_=(G)$. Moreover, note that for a general graph $G$ if it admits an equitable vertex $t$-coloring it does not imply that it admits an equitable vertex $(t+1)$-coloring (cf. for example $t=2$ and $G=K_{3,3}$). That is why we also consider the concept of \emph{equitable chromatic spectrum}, i.e. the set of colors  admitting equitable vertex coloring of the graph. The smallest $k$ such that $G$ admits an equitable vertex $t$-coloring for every $t \geq k$ is called the \emph{equitable chromatic threshold} and is denoted by $\chi_=^*(G)$. If $\chi_=^*(G)=\chi_=(G)$ then we say that the equitable chromatic spectrum of $G$ is \emph{gap-free}.

\begin{figure}
    \centering
    (a)\includegraphics[scale=0.7]{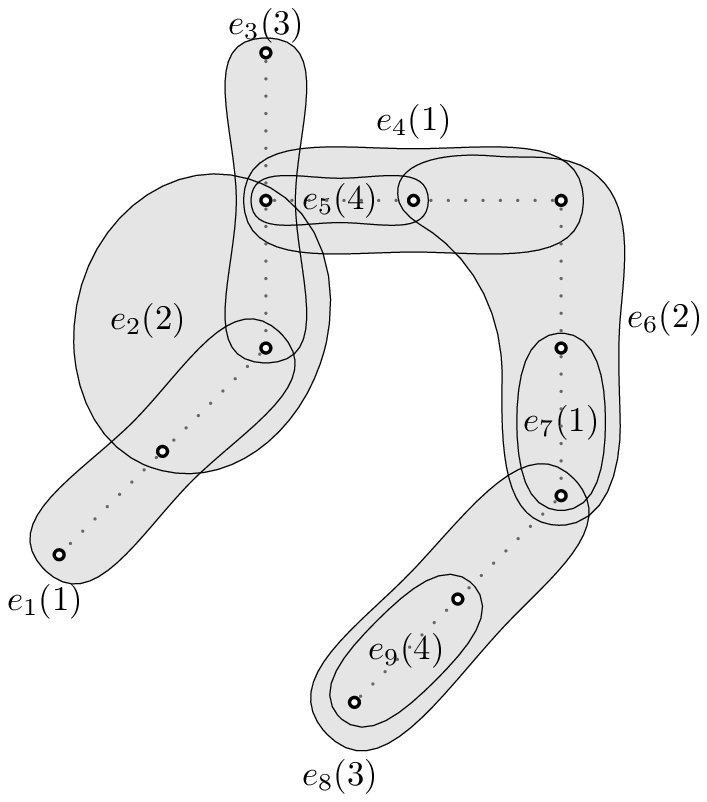}\hspace{0.5cm} (b)\includegraphics[scale=0.7]{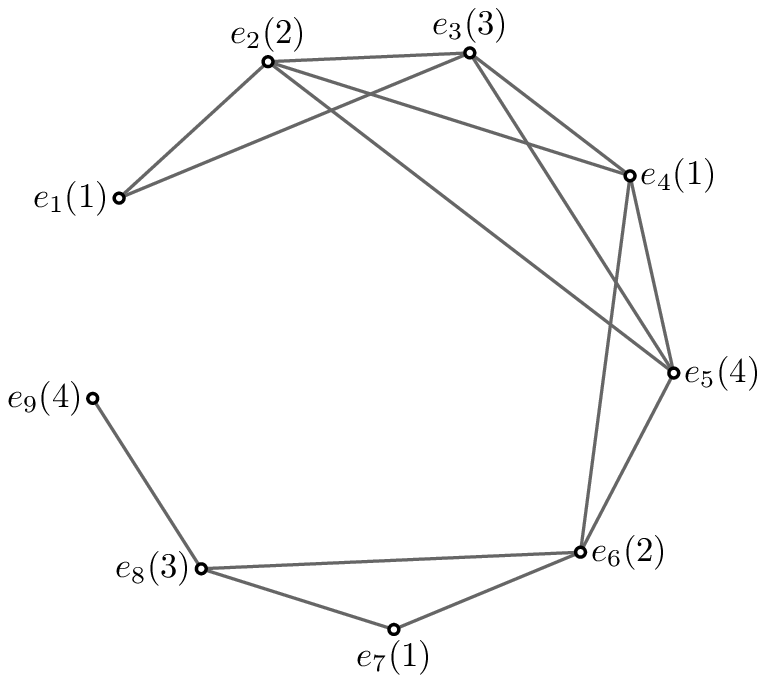} 
    \caption{(a) An example of hypertree $\mathcal{H}$ (hyperedges depicted with solid lines) with an exemplary equitable edge 4-coloring (colors in brackets) and an exemplary host graph (dotted edges). (b) The corresponding line graph $L(\mathcal{H})$ with the corresponding equitable vertex 4-coloring.}
    \label{fig:host}
\end{figure}

Despite the fact that the corresponding problem for simple graphs has been widely studied (for interesting surveys, see \cite{furm:en_book}, \cite{lih}), its generalization to hypergraphs does not seem to have been addressed in the literature. 
To the best of our knowledge there is no paper in the literature that concerns the problem of equitable edge coloring of hypergraphs with the definition given above. 
Hypergraphs in general are very useful in real-life problems modeling, for example in chemistry, telecommunications, and many other fields of science and engineering \cite{hyper}. 
They have also applications in image representation \cite{image}.
Thus, generalization of equitable coloring of simple graphs to hypergraphs seems to be justified. It is known that the model of equitable coloring of simple graphs has many applications, among others in task scheduling (see \cite{furm:4sch}, \cite{obsz:jastrz}). Every time, when we have to divide a system with binary conflict relations into equal or almost equal conflict-free subsystems, we can model this situation by means of equitable graph coloring.

In the paper we study chordal graphs, their subclasses, and the complexity status of the problem of equitable coloring for them. A graph is \emph{chordal} if every cycle of length at least 4 has a chord. It is known (cf. \cite{duchet, mckee}) that a graph $G$ is chordal, if and only if it is a line graph of a hypertree, where \emph{hypertree} is defined as a hypergraph that has an underlying  tree. Thus equitable edge coloring of hypertrees is equivalent to equitable vertex coloring of chordal graphs (cf. Fig.~\ref{fig:host}).  On the other hand, we know (cf.  \cite{bounded}, \cite{bod:part}) that the problem of equitable vertex coloring of interval graphs is NP-hard. Since each interval graph is also chordal, we have also NP-hardness of the problem for chordal graphs. In consequence:

\begin{corollary}
The problem of an equitable edge-coloring for hypertrees is NP-hard.
\end{corollary}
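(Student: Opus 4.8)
The plan is to prove the corollary by a polynomial-time reduction from the equitable \emph{vertex} coloring problem on chordal graphs, whose NP-hardness was just noted (every interval graph is chordal, and equitable vertex coloring of interval graphs is NP-hard by \cite{bounded, bod:part}). The bridge is the equivalence recalled in the excerpt: for a hypertree $\mathcal{H}=(V,\mathbb{E})$ an equitable edge $k$-coloring is a partition of $\mathbb{E}$ into $k$ pairwise independent classes whose sizes differ by at most one, which is exactly a proper vertex $k$-coloring of $L(\mathcal{H})$; since $|V(L(\mathcal{H}))|=|\mathbb{E}|$, the size constraints $\lceil\,\cdot/k\rceil,\lfloor\,\cdot/k\rfloor$ match verbatim, so this is precisely an equitable vertex $k$-coloring of $L(\mathcal{H})$. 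Hence $\chi'_=(\mathcal{H})=\chi_=(L(\mathcal{H}))$ (indeed the two equitable chromatic spectra coincide), and any algorithm deciding, for a given $k$, equitable edge $k$-colorability of hypertrees decides equitable vertex $k$-colorability of the corresponding chordal graphs. It therefore suffices to show that the ``only if'' direction of the characterization $\cite{duchet, mckee}$ is constructive: given a chordal graph $G$, one can produce in polynomial time a hypertree $\mathcal{H}$ with $L(\mathcal{H})$ isomorphic to $G$.

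This is the standard clique-tree construction. First compute, in polynomial time (e.g.\ via Lex-BFS followed by the algorithms of Gavril/Buneman/Walter), a clique tree $T$ of $G$: the nodes of $T$ are the maximal cliques of $G$, and for every $v\in V(G)$ the maximal cliques containing $v$ induce a subtree $T_v$ of $T$. Then let $\mathcal{H}$ have vertex set $V(\mathcal{H})=V(T)$, underlying tree $T$, and one hyperedge $e_v:=V(T_v)$ for each $v\in V(G)$. Each $e_v$ is non-empty and induces the connected subtree $T_v$ in $T$; two hyperedges $e_u,e_v$ intersect if and only if some maximal clique contains both $u$ and $v$, i.e.\ if and only if $uv\in E(G)$. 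Thus $L(\mathcal{H})=G$ and $|\mathbb{E}|=|V(G)|=n$. Assuming, as one may for an NP-hardness reduction, that $G$ is connected, every edge of $T$ corresponds to a non-empty minimal vertex separator of $G$ and so is contained in some $e_v$; hence $\mathcal{H}$ meets the definition of hypertree used in the excerpt (underlying tree, every tree edge covered by a hyperedge).

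The main obstacle is essentially bookkeeping rather than a genuine mathematical difficulty: one must check that the clique tree is genuinely polynomial-size and polynomial-time computable (a chordal graph on $n$ vertices has at most $n$ maximal cliques, so $|V(T)|=O(n)$), and that the produced $\mathcal{H}$ satisfies \emph{all} the technical requirements of a hypertree as defined here --- in particular the ``every host-tree edge is covered'' clause, which is the only place connectivity of $G$ is used and is handled as above. Feeding the reduction an NP-hard family of connected chordal (indeed interval) instances of equitable coloring then yields that the decision problem ``given a hypertree $\mathcal{H}$ and $k$, does $\mathcal{H}$ admit an equitable edge $k$-coloring?'' is NP-hard, which is the corollary.
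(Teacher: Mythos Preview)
Your proposal is correct and follows essentially the same route as the paper: the paper simply invokes the equivalence between equitable edge coloring of hypertrees and equitable vertex coloring of chordal graphs (via the line-graph characterization \cite{duchet,mckee}) together with the NP-hardness of the latter inherited from interval graphs \cite{bounded,bod:part}, and states the corollary without further argument. Your write-up adds the explicit polynomial-time clique-tree construction of a hypertree $\mathcal{H}$ with $L(\mathcal{H})\cong G$ and checks the host-edge covering clause, which is a welcome elaboration but not a different method.
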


Bodlaender \cite{bounded} proved that the problem of equitable $k$-coloring can be solved in polynomial time for graphs with given tree decomposition and for fixed $k$. The treewidth of a chordal graph equals the maximum clique size minus one. Bodlaender \cite{bounded} proved also that the problem of an equitable $k$-coloring is solvable in polynomial time for graphs with bounded degree even if $k$ is a variable.
\begin{corollary}
The problem of an equitable $k$-coloring is solvable in polynomial time for chordal graphs with bounded maximum clique size.
\end{corollary}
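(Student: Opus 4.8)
The plan is to combine Bodlaender's two theorems by splitting on the parameters that may be large and to put the genuinely new content into a characterisation of equitable colourability that exploits chordality. Fix the constant $\omega$ bounding $\omega(G)$. Since $G$ is chordal, its treewidth equals $\omega(G)-1\le \omega-1$, a constant, so Bodlaender's first theorem disposes of every instance with $k$ bounded by a constant, and his second theorem disposes of every instance with $\Delta(G)$ bounded by a constant. We may also assume $\chi(G)\le k\le |V(G)|$, since otherwise $G$ is not properly $k$-colourable at all, and $\chi(G)=\omega(G)$ is polynomial-time computable for chordal graphs. So the remaining task is the regime where $k$ and $\Delta(G)$ are both large; here bounded treewidth is not by itself enough, so the argument must use chordality more strongly than merely through the treewidth bound.

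The core of the proof is to show that, for a chordal graph $G$ with $\omega(G)\le\omega$, equitable $k$-colourability is controlled by an efficiently verifiable criterion. The candidate I would try to establish is that $G$ is equitably $k$-colourable if and only if $\chi(G)\le k\le |V(G)|$ and
\[
\Delta(G)\ \le\ |V(G)|-\left\lfloor\frac{|V(G)|}{k}\right\rfloor ,
\]
possibly with one further easily checked inequality. Necessity is immediate for any graph: in a proper colouring the class of a vertex of maximum degree avoids all of its $\Delta(G)$ neighbours, hence has at most $|V(G)|-\Delta(G)$ vertices, while every class of an equitable $k$-colouring has at least $\lfloor |V(G)|/k\rfloor$ vertices. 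The work is the converse. I would prove it by induction along a perfect elimination ordering (equivalently, along the clique tree of $G$): a chordal graph with $\omega(G)\le\omega$ is $(\omega-1)$-degenerate and has only $O(\omega\,|V(G)|)$ edges, so a constant fraction of its vertices have bounded degree; one starts from a proper $k$-colouring that respects the clique structure and then rebalances the colour classes by local recolourings, using these low-degree ``pendant-like'' vertices as the flexible part. When $\Delta(G)$ happens to be small one can instead invoke Bodlaender's second theorem, which streamlines the base and special cases of the induction.

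The hard part will be this sufficiency step: turning the degree inequality into an actual balanced proper colouring of \emph{every} chordal graph with bounded clique number, and doing so constructively in polynomial time. This is precisely where both hypotheses matter — chordality, because one must use the Helly / tree-like overlap structure of the maximal cliques rather than only the treewidth, and bounded clique number, because it is what guarantees enough small, movable colour classes to absorb the imbalance. It is also where the structural characterisations of block graphs (and of the other chordal classes treated later in the paper) carry the load, by describing exactly how the cliques and the low-degree vertices are arranged.
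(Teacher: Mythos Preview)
The paper does not prove this corollary at all in the sense you are attempting; it is stated as an immediate consequence of the two sentences preceding it. The entire intended argument is: the treewidth of a chordal graph equals $\omega(G)-1$, so bounding the maximum clique size bounds the treewidth, and then one invokes Bodlaender's result for bounded-treewidth graphs. That is the whole proof.

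Your proposal, by contrast, is a research programme whose central step is false. The claimed characterisation --- that a chordal graph with bounded clique number is equitably $k$-colourable if and only if $\chi(G)\le k$ and $\Delta(G)\le |V(G)|-\lfloor |V(G)|/k\rfloor$ --- fails already for very small examples. Take the ``book'' graph on five vertices: $v_1,v_2$ adjacent, and $u_1,u_2,u_3$ each adjacent to both $v_1$ and $v_2$ but to nothing else. This graph is chordal with $\omega=3$, $|V|=5$, $\Delta=4$, $\chi=3$. For $k=3$ your inequality reads $4\le 5-\lfloor 5/3\rfloor=4$, so the criterion is satisfied; yet any proper colouring must put $v_1$ and $v_2$ in singleton classes (each is adjacent to every other vertex), giving class sizes $1,1,3$, which is not equitable. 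Your hedge ``possibly with one further easily checked inequality'' does not rescue the plan: you do not specify what that inequality is, and the obstruction here comes from having \emph{two} universal-like vertices, a phenomenon that scales (replace $3$ by any odd $n$). The later structural results in the paper that you hope to lean on are for \emph{block graphs}, and the book graph is chordal but not a block graph, so those tools do not apply.

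In short, you have both overshot the target and introduced a genuine error. The corollary is meant to be read off directly from the cited theorem; no new structural characterisation is needed, and the one you propose is wrong.
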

On the other hand, Gomes et al. \cite{EqParamFPT} proved that, when the treewidth is a parameter to the algorithm, the problem of equitable vertex coloring is W[1]-hard. Thus, it is unlikely that there exists a polynomial time algorithm independent of this parameter. In this paper, we address the problem in block graphs, which are the graphs with every 2-connected component being a clique. A clique of a graph $G$ is a maximal complete subgraph of $G$. For block graphs, it is shown in \cite{EqParamFPT} that the problem is W[1]-hard with respect to the treewidth, diameter and the number of colors. This in particular means that under the standard assumption FPT$\neq$W[1] in parameterized complexity theory, the problem is not likely to be polynomial time solvable in block graphs.

In what follows when we refer to equitable coloring we mean equitable vertex coloring unless stated differently.
For a graph $G$ let $\alpha(G)$ be the size of the largest independent set in $G$, while $\alpha(G,v)$ is the size of the largest independent set that contains the vertex $v$ in $G$. Define: 
\[\alpha_{min}(G)=\min_{v\in V(G)}\alpha(G,v).\]
Clearly, $\alpha_{min}(G)\leq \alpha(G)$, and $\alpha_{min}(G)= \alpha(G)$ if and only if every vertex of $G$ lies in a maximum independent set of $G$. Such graphs are known in the literature as \emph{well-covered} graphs \cite{plummer}. A \emph{simplicial vertex} is a vertex that appears in exactly one clique of a graph. A \emph{cut vertex} in a connected graph $G$ is a vertex $v$ that $G-v$ is disconnected. 
In a block graph we define a clique as \emph{pendant} if it contains exactly one cut-vertex, while a clique is \emph{internal} if all its vertices are cut-vertices. Vertices of an internal clique are called \emph{internal} vertices.
For a graph $G$ let $\omega(G)$ be the size of the largest clique of $G$. For every graph, not necessarily a block graph, it is known that
\begin{equation}
    \chi_{=}(G)\geq \max \left\{\omega(G),\left \lceil \frac{|V(G)|+1}{\alpha_{min}(G)+1}\right\rceil\right\}.
    \label{lower_bound}
\end{equation}

Indeed, the equitable chromatic number of a graph $G$ cannot be less than its clique number. Moreover, it cannot be less than $\lceil \frac{|V(G)|+1}{\alpha_{min}(G)+1}\rceil$. The latter follows from the assumption that one color is used exactly $\alpha_{\min}(G)$ times, and any other color can be used at most $\alpha_{\min}(G)+1$ times.
It turns out that the number of colors given by the expression on the right side of the inequality is not sufficient to color equitably every block graph. 
For example, take a clique of size $k$, $k\geq 2$, and add $k+1$ pendant cliques of size $k+1$ to each vertex (cf. Fig. \ref{fig:k=2case}).
It can be easily checked that 
$|V|=k+k(k+1)k=k(k^2+k+1)$,
$\omega(G)=k+1$,
$\alpha(G)=k(k+1)=k^2+k$, while
$\alpha_{min}(G)=1+(k-1)(k+1)=k^2$.
Observe that
$\alpha(G)-\alpha_{min}(G)=k\geq 2.$ 
Thus,
\[\max \left\{\omega(G),\left\lceil \frac{|V(G)|+1}{\alpha_{min}(G)+1}\right\rceil \right\}=\max\left\{k+1,\left\lceil \frac{k^3+k^2+k+1}{k^2+1}\right\rceil\right\}=k+1.\]
One can easily check that there is no equitable $(k+1)$-coloring of such graphs. On the other hand, it is easy to show that this graph is equitably $(k+2)$-colorable, hence $\chi_{=}(G)= k+2$.

\begin{figure}
    \centering
    \includegraphics{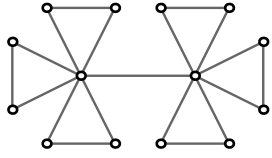}
    \caption{The $k=2$ case of the example.}
    \label{fig:k=2case}       
\end{figure}

\noindent The gap between $\chi_{=}(G)$ and $\max \left\{\omega(G),\left\lceil \frac{|V(G)|+1}{\alpha_{min}(G)+1}\right\rceil\right\}$ in the example above is one. This led us to the following conjecture, which is similar to the classical theorem of Vizing for graphs and the Goldberg conjecture for multigraphs:

\begin{conjecture}
\label{conj:gap1} For any block graph $G$, we have: 
\[\max \left\{\omega(G),\left\lceil \frac{|V(G)|+1}{\alpha_{min}(G)+1}\right\rceil\right\} \leq \chi_{=}(G) \leq 1+\max \left\{\omega(G),\left\lceil \frac{|V(G)|+1}{\alpha_{min}(G)+1}\right\rceil\right\}.\]
\end{conjecture}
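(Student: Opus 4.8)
Since the left-hand inequality is exactly~\eqref{lower_bound}, the only thing requiring an argument is the upper bound: writing $k=\max\{\omega(G),\lceil(|V(G)|+1)/(\alpha_{min}(G)+1)\rceil\}$, the plan is to construct an equitable $(k+1)$-coloring of an arbitrary block graph $G$. Two preliminary remarks guide the construction. First, properness will never be the obstruction: a block graph is chordal, hence perfect, so $\chi(G)=\omega(G)\le k$, and the fact that its cliques pairwise share at most one vertex places us in the setting of the Erd\H{o}s--Faber--Lov\'asz theorem, so even proper $\omega(G)$-colorings are plentiful and flexible. Second, a short calculation using $k\ge\lceil(|V(G)|+1)/(\alpha_{min}(G)+1)\rceil$ shows that in a $(k+1)$-coloring every class may be required to have size at most $\alpha_{min}(G)+1$; thus $\alpha_{min}$ is precisely the parameter against which the rebalancing has to be measured.

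I would argue by induction on the number of blocks, organised along the block--cut tree $T$ of $G$. First I would root $T$ and process its blocks from the leaves toward the root, carrying along a coloring together with an invariant that records which colors already occur on each cut vertex separating a processed subtree from the remainder, and that keeps the color classes restricted to the processed part as balanced as the counting inequality permits. The base case is a single clique $K_n$ with $n=\omega(G)\le k$, which is equitably $t$-colorable for every $t\ge n$ by placing its vertices in $n$ distinct classes. In the inductive step one attaches a pendant clique $B$ at a cut vertex $v$ whose color is already fixed: the $|B|-1\le k$ simplicial vertices of $B$ must receive colors distinct from $c(v)$, and among the many legal ways to do this one picks the assignment that best corrects the current global imbalance. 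A structural fact that makes such corrections tractable is that in a block graph the subgraph induced by any two color classes is a forest --- each clique contributes at most one edge between the two classes, and no cycle can leave a block --- so Kempe-type exchanges along these forests are available and shift the difference of two class sizes in controlled steps.

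The hard part will be reconciling properness at a cut vertex of large degree with the demand that all classes be nearly equal. A cut vertex $v$ lying in many cliques may see every one of the $k+1$ colors in its closed neighbourhood, which blocks local recoloring moves in its vicinity; the vertices with small $\alpha(G,v)$ are exactly those of this kind, and a complete proof needs an amortized/charging argument over $T$ showing that the total imbalance forced by all such ``tight'' vertices never exceeds the single unit of slack that the extra $(k+1)$-st color buys. Making this charging work in full generality is what is left open. What the paper does instead is to isolate the structural conditions under which the argument can be carried through: well-covered block graphs, where $\alpha_{min}(G)=\alpha(G)$ and the obstruction disappears; block graphs containing a vertex that lies in no independent set of size larger than two, whose structure is so restricted that an equitable coloring can be written down explicitly; and symmetric-like block graphs, where the symmetry lets one spread the colors evenly by design. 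The accompanying structural characterizations of these three classes are what reduce each case to a finite, checkable combinatorial core.
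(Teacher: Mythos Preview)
The statement you are attempting is a \emph{conjecture} in the paper, not a theorem: the paper does not prove it in general and explicitly leaves it open, verifying it only for the subclasses discussed in Sections~\ref{sec:well}--\ref{small}. There is therefore no ``paper's proof'' to compare against, and your proposal is not a proof either --- as you yourself concede when you write that ``making this charging work in full generality is what is left open.'' What you have written is an outline of a plausible attack together with an honest identification of where it stalls, followed by an accurate summary of what the paper actually establishes.

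Two specific comments on the sketch. First, the appeal to an Erd\H{o}s--Faber--Lov\'asz ``theorem'' is out of place: in the paper it is still a conjecture, and more importantly its hypothesis (exactly $t$ cliques, each of size exactly $t$) is not met by a general block graph, so it does not supply the claimed abundance of proper colorings --- perfection alone already gives $\chi(G)=\omega(G)$, which is all you need there. Second, the block--cut-tree induction with Kempe exchanges along two-colored forests is a natural line, but it is precisely the scheme that the tight examples in the introduction (the clique of size $k$ with $k{+}1$ pendant $(k{+}1)$-cliques at each vertex) obstruct: at the central cut vertices every color is forced into the closed neighbourhood and no local exchange repairs the imbalance, which is why the single extra color is genuinely needed and why a global charging argument, rather than a greedy one, would be required. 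You locate this difficulty correctly but do not resolve it, so the proposal remains a heuristic rather than a proof; the paper's contribution is exactly to carve out the structural hypotheses (well-covered, $\alpha_{\min}\le 2$, symmetric-like) under which an argument \emph{can} be completed.
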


\noindent We have confirmed the conjecture for all block graphs on at most 19 vertices, using a computer. Moreover, the conjecture is true for forests, i.e. for block graphs with $\omega(G)=2$ \cite{forests}.  Since the class of connected block graphs in which each cut vertex is on exactly two blocks is equivalent to line graphs of trees \cite{harary}, we have $ \chi_='(T)=\chi_=(G),$ for a tree $T$ and its line graph $G$. As trees are of Class 1 and $\Delta(T)=\omega(L(T))$ for a tree $T$, then we have $\chi_=(G)=\omega(G)$ for connected block graphs in which each cut vertex is on exactly two blocks. Thus our conjecture is true for such block graphs. Moreover, since an arbitrary graph $H$ is equitably edge $k$-colorable for every $k \geq \chi_='(H)$ \cite{furm:en_book}, then $\chi_=^*(G)=\chi_=(G)$ and the equitable chromatic spectrum of block graph $G$ in which each cut vertex is on exactly two blocks is gap-free. In this paper we prove the conjecture for well-covered block graphs, using the unusual tool of Ferrers matrix (Section \ref{sec:well}).  
Moreover, we prove our Vizing-Goldberg type conjecture for block graphs with small value of $\alpha_{\min}$ (Section \ref{small}) as well for some symmetric-like block graphs (Section \ref{symm}).

Finally, we would like to draw the reader attention +to the fact that the problem of block graphs coloring is closely related to very well known EFL conjecture formulated by  Erd\H{o}s, Faber, and Lov\'{a}sz in 1972 \cite{erdos}. They supposed that if $t$ complete graphs, each having exactly $t$ vertices, have the property that every pair of complete graphs has at most one shared vertex, then the union of the graphs can be colored with $t$ colors. Note that some block graphs are among those graphs that are affected by the conjecture. The only condition that they must fulfill is the size of all blocks is the same, and is equal to $t$. Note that the EFL conjecture implies that classical chromatic number of the graphs for which the conjecture holds is equal to their clique number. Hence, our results that say that some block graph $G$ can be equitably colored with $\omega(G)$ colors partialy confirm EFL conjecture - cf. Theorems \ref{thm:alpha=alphamin} and
\ref{thm:sym}.

\section{Well-covered block-graphs}\label{sec:well}

In this section we confirm Conjecture \ref{conj:gap1} for well-covered block graphs, i.e. for graphs fulfilling the condition  $\alpha_{min}(G)=\alpha(G)$. In this case we have the following chain of inequalities

\[\frac{|V(G)|+1}{\alpha_{min}(G)+1}\leq \frac{\alpha(G)\cdot \omega(G)+1}{\alpha(G)+1}\leq \omega(G).\] Hence the conjecture states that any such block graph must be equitably $\omega(G)$- or $(\omega(G)+1)$-colorable. In fact, we will show that well-covered block graphs are equitably $k$-colorable for all $k\geq \omega(G)$.
\subsection{Characterization}\label{character}
We start with a recursive characterization of well-covered block graphs. There is one basic class of such graphs, namely complete graphs. Now, we define the following operation: let $H$ be a block graph and let $v$ be a vertex of $H$. Add a clique $Q$ of size at least 2 to $H$ ($Q\cap H=\{v\}$), and add one pendant clique to each vertex of $Q$ except $v$. With this, all vertices of $Q$ will become cut-vertices and, in consequence, $Q$ becomes an internal clique. Let us note that the added pendant cliques may have different sizes. For further purpose we take the following notation. Let $v\in V(H)$. We add a clique $Q$ of size $s$, $s\geq 2$, with $s-1$ pendant cliques to graph $H$ by identifying one vertex of $Q$ with $v$. Let $V(Q)=\{v,v_1,\ldots,v_{s-1}\}$. Let $Q_i$ denote pendant clique added to vertex $v_i$, $|Q_i|=q_i$, $i \in [s-1]$. 
We order the pendant cliques due to their sizes in a non-increasing way creating a
vector $q=(q_1',\ldots, q_s')$, $q_1' \geq \cdots \geq q_s'$ (cf. Fig.~\ref{fig:ex}). Every $q_i'$, $i \in [s-1]$, corresponds to the size of one pendant clique. We assume that $q_s'=1$ and this term corresponds to the vertex $v$. 

\begin{theorem}
\label{thm:characAlpha=AlphaMin} Let $G$ be a well-covered block graph with a largest clique size $\omega(G)$. Then $G$ is either a basic graph, i.e. a clique of size at most $\omega(G)$, or $G$ can be obtained from a smaller well-covered block graph $H$ with $\omega(H)\leq \omega(G)$ by using the operation defined above, such that all involved cliques are of size at most $\omega(G)$.
\end{theorem}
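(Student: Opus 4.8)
The plan is to peel off from $G$ one internal clique together with the pendant cliques hanging on it, producing a strictly smaller well-covered block graph $H$ from which $G$ is recovered by the operation above; the theorem then follows by induction on $|V(G)|$, with the complete graphs as the base case. So assume $G$ is connected and is not a clique, and consider its block-cut tree (the tree whose nodes are the blocks and the cut vertices of $G$); since $G$ is not complete, this tree has at least one cut vertex.

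The engine of the proof is one structural consequence of well-coveredness. Call a block \emph{non-internal} if it contains a simplicial (non-cut) vertex. \textbf{Claim:} in a well-covered block graph every cut vertex lies in at most one non-internal block. I would prove this by a local exchange: if a cut vertex $w$ were in two distinct non-internal blocks $B$ and $C$, choose simplicial vertices $z_B\in B\setminus\{w\}$ and $z_C\in C\setminus\{w\}$; they are distinct and, since each lies in a single block, non-adjacent. Taking a maximum independent set $I$ with $w\in I$ (one exists because $G$ is well-covered), the set $(I\setminus\{w\})\cup\{z_B,z_C\}$ is independent and strictly larger, a contradiction. An immediate corollary is that a non-complete well-covered block graph has an internal block, for otherwise the two or more blocks meeting at any cut vertex would all be non-internal.

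Now I would locate the clique to remove. Root the block-cut tree at a cut vertex and let $Q$ be an internal block of maximum depth; write $Q=\{v,v_1,\dots,v_{s-1}\}$ with $s=|Q|\ge 2$, where $v$ is the parent cut vertex of $Q$ and $v_1,\dots,v_{s-1}$ its children (all of them cut vertices, since $Q$ is internal). For each $i$, every block containing $v_i$ other than $Q$ lies below $v_i$, hence is deeper than $Q$, hence is not internal by the choice of $Q$; the Claim applied at $v_i$ then forces $v_i$ to lie in exactly one block $B_i\neq Q$, and the Claim applied at a hypothetical further cut vertex of $B_i$ forces $B_i$ to be pendant. So near $Q$ the graph is precisely the output of the operation. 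Set $H:=G-\bigl((Q\cup\bigcup_i B_i)\setminus\{v\}\bigr)$: it is a connected induced subgraph of $G$, hence a block graph with $\omega(H)\le \omega(G)$ and $|V(H)|<|V(G)|$, and $G$ is obtained from $H$ by the operation at $v$ with clique $Q$ and pendant cliques $B_1,\dots,B_{s-1}$, each of which is a block of $G$ and so has size at most $\omega(G)$. (Since $Q$ is internal in $G$, the vertex $v$ is a cut vertex of $G$, which forces $|V(H)|\ge 2$, so the operation is genuinely applied to a well-covered block graph.)

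It remains to show $H$ is well-covered. Any independent set of $G$ meets each clique $B_i$ in at most one vertex, so $\alpha(G)\le \alpha(H)+(s-1)$; conversely $\alpha(G)\ge \alpha(H)+(s-1)$ by extending any maximum independent set of $H$ with one vertex of each $B_i$, so $\alpha(G)=\alpha(H)+(s-1)$. Given $w\in V(H)$, well-coveredness of $G$ gives a maximum independent set $I$ of $G$ with $w\in I$; since $|I\cap B_i|\le 1$ for every $i$ we get $|I\cap V(H)|\ge \alpha(G)-(s-1)=\alpha(H)$, so $I\cap V(H)$ is a maximum independent set of $H$ containing $w$. Thus $H$ is well-covered and the induction step is complete. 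I expect the main obstacle to be the Claim, together with the bookkeeping within the block-cut tree that converts ``deepest internal block'' into the exact local configuration the operation demands (one pendant clique on each $v_i$, and nothing else below $Q$); after that, the size bounds and the well-coveredness of $H$ are routine.
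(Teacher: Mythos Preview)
Your proposal is correct and follows essentially the same approach as the paper. The paper's key observation that ``$G$ contains no two intersecting cliques such that both of them contain simplicial vertices'' is exactly your Claim, proved by the same exchange argument; your ``deepest internal block in the rooted block--cut tree'' is just a more explicit way of locating the clique that the paper finds by ``remove all pendant cliques, then take a pendant clique in the resulting graph,'' and the verification that $H$ is well-covered via $\alpha(G)=\alpha(H)+(s-1)$ and restricting a maximum independent set is identical in both.
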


\begin{proof} We prove this theorem in three steps. First of all, let us note that our basic class satisfies the condition $\alpha_{min}(G)=\alpha(G)$. Indeed, since  $G$ is a clique, then clearly $\alpha_{min}(G)=\alpha(G)=1$. 

Now, let us show that if we have a block graph $H$ with $\alpha_{min}(H)=\alpha(H)$ and we apply the operation given above (Case 1), then we will get a block graph $G$ with $\alpha_{min}(G)=\alpha(G)$. Assume that $v$ is the vertex of $H$ to whom we have added a clique $Q$ of size $s$ and have added $(s-1)$ pendant cliques to the vertices of $Q$ except $v$. Clearly, $\alpha(G)=\alpha(H)+s-1$. In order to complete the proof of this step, let us show that any vertex of $G$ lies in an independent set of $G$ of size $\alpha(H)+s-1$. Let $w$ be any vertex of $G$. If $w$ lies in $H$ or is one of simplicial vertices of pendant cliques adjacent to $Q$, then the statement is easy. Let us assume that $w$ is an internal vertex of $Q$. If it is $v$ then again the statement is easy. If it is a vertex of $Q$ different from $v$, then let $z$ be a vertex of $H$ adjacent to $v$. We can extend $z$ to an independent set of $H$ of size $\alpha(H)$, because $\alpha_{min}(H)=\alpha(H)$. Observe that $v$ does not belong to this independent set. Now add $w$ to it, and add $(s-2)$ simplicial vertices of pendant cliques (one simplicial vertex per pendant clique, except the one containing $w$). Observe that the resulting set is independent and its size is $\alpha(H)+1+(s-2)=\alpha(G)$. Thus, $\alpha_{min}(G)=\alpha(G)$.

In order to complete the proof of the theorem, let us show that if $G$ is a block graph with $\alpha_{min}(G)=\alpha(G)$, largest clique size $\omega(G)$ and outside from the basic class, then it can be obtained from a block graph $H$ with $\alpha_{min}(H)=\alpha(H)$ and the largest clique size $\omega(H)\leq \omega(G)$ by using our operation. First of all, let us observe that $G$ contains no two intersecting cliques, such that both of them contain simplicial vertices. If $G$ has two such cliques, then let $w$ be a vertex in their intersection. Extend $w$ to an independent set of $G$ of size $\alpha(G)$. Now, replace $w$ with two simplicial vertices of these cliques. Clearly, the resulting set is independent and its size is $\alpha(G)+1$, which is a contradiction.

Now let us remove all pendant cliques of $G$ except their cut-vertices. Because of the tree-structure of the block graph, we can always find a pendant clique $Q$ in the resulting graph. Observe that if we put back the removed cliques, then each vertex of $Q$ except the cut-vertex $v$ lies in a pendant clique of $G$. Let this clique be of size $s$. Observe that the vertices of $Q$ different from $v$ are intersecting with pendant cliques, one pendant clique per vertex different from $v$. Let $H$ be the graph obtained from $G$ by removing the $(s-1)$ pendant cliques intersecting $Q$ and the $(s-1)$ vertices of $Q$ different from $v$. Observe that $H$ is a block graph with $\omega(H)\leq \omega(G)$. Moreover, $\alpha(H)=\alpha(G)-(s-1)$. We claim that $\alpha_{min}(H)=\alpha(H)$. In order to see this, it suffices to show that any vertex $x$ of $H$ lies in an independent set of $H$ of size $\alpha(G)-(s-1)$. Since $\alpha_{min}(G)=\alpha(G)$, $x$ lies in an independent set of $G$ of size $\alpha(G)$. Without loss of generality, we can assume that this independent set takes simplicial vertices from pendant cliques intersecting $Q$ (one simplicial vertex per pendant clique). Observe that the number of these vertices is $s-1$. Hence, if we remove these $s-1$ vertices, we will obtain an independent set of $H$ having size $\alpha(G)-(s-1)=\alpha(H)$. Thus, $\alpha_{min}(H)=\alpha(H)$ and the proof is complete. \end{proof}

It worths to mention that our characterization completes the knowledge on characterization of different well-covered classes of graphs given in the literature (cf.\cite{prisner}).

\subsection{Some auxiliaries}

In the proof of the main result of this section we use a slightly modified concept of Ferrers matrix. 
Let $x=(x_1,\ldots,x_l)$ be any sequence with  $x_i \in \mathbb{Z}^{+}$, $i\in [l]$. It can be visualized by $l \times y$ matrix $M=(m_{i,j})$ of zeros and ones called  \emph{Ferrers matrix} for $x$ and defined by a sum of its row vector $r(M)=x$ and the properties:  (1) $y\geq l$, (2) if $m_{i,j}=0$ then $m_{i,t}=0$ for all $t \geq j$. In our concept \emph{modified Ferrers matrix} is also defined by row vector but, in addition, we have property $m_{i,l-i+1}=1$, while the second condition is modified to: if $m_{i,j}=0$ then $m_{i,t}=0$ for all $t \geq j$ excluding $t=l-i+1$ (cf. Fig.~(\ref{fig:ex2}a)). 
\begin{figure}[h]
    \centering
    \begin{tabular}{ll}
       a) & \hspace{2cm}b) \\
       $
   M=\left [
   \begin{array}{cccccc}
   1 & 1 & 0 & 1 & 0 & 0\\
   1 & 1 & 1 & 0 & 0 & 0\\
   1 & 1 & 0 & 0 & 0 & 0\\
   1 & 0 & 0 & 0 & 0 & 0\\
   \end{array}\right ]
   $  &  \hspace{2cm} 
   $
  M'=\left [
   \begin{array}{cccccc}
   0 & 0 & 1 & 1 & 0 & 1\\
   0 & 1 & 1 & 0 & 1 & 0\\
   1 & 1 & 0 & 0 & 0 & 0\\
   1 & 0 & 0 & 0 & 0 & 0\\
   \end{array}\right ]
   $
     \end{tabular}
   
    \caption{a) Modified Ferrers matrix $M$ for $q=(3,3,2,1)$ and $y=6$; b) An example of matrix $M'$ created due to the instructions given in the proof of Theorem \ref{mFn} corresponding to the matrix $M$ with vector $c(M')=p=(2,2,2,1,1,1)$.}
    \label{fig:ex2}
\end{figure}
Let $c(M)$ denote the vector of column sums in the matrix $M$. Given two arbitrary sequences $a=(a_1,\ldots,a_l)$ and $b=(b_1,\ldots b_y)$, such that $a_1+\cdots +a_l=b_1+\cdots+b_y$, we say that the vector $b$ is \emph{dominated} by the vector $a$ if $\sum_{i=1}^r a_i \geq \sum_{i=1}^r b_i$ for all positive integers $r$, where $a_i=0$ for all $i>l$ and $b_i=0$ for all $i>y$. We denote it by $a \unrhd b$.


\begin{theorem}
Let $a=(a_1,\ldots,a_l)$ and $b=(b_1,\ldots b_y)$ be two non-increasing sequences such that $a_1+\cdots +a_l=b_1+\cdots+b_y$.  Moreover, let $M$ be the modified Ferrers matrix for $a$. If $b$ is dominated by $c(M)$ then there exists $0$--$1$ matrix $M' = (m'_{i,j})$ such that $r(M')=a$ and $c(M')=b$, and  ${m'}_{i,l-i+1}=1$ for every $i \in [l]$.
\label{mFn}
\end{theorem}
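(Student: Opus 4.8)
Here is how I would attack it. The statement is a Gale--Ryser--type realisation result carrying the extra requirement that the ``anti-diagonal'' cells $(i,l-i+1)$ all be equal to $1$. The plan is not to build $M'$ from scratch but to start from $M$ itself and massage it into $M'$: indeed, by the definition of the modified Ferrers matrix, $M$ already has row sums $a$, already has $m_{i,l-i+1}=1$ for every $i\in[l]$, and by hypothesis its column--sum vector $c(M)$ satisfies $c(M)\unrhd b$. I would then transform $M$ into $M'$ by a finite sequence of \emph{rightward transfers}: each transfer moves a single $1$ inside one row from some column to a strictly later column and never touches an anti-diagonal cell. Such a step keeps all row sums fixed, keeps the anti-diagonal cells equal to $1$, and (as I will check) keeps the column--sum vector a majorant of $b$; once the column sums equal $b$, the resulting matrix is $M'$.

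In detail, suppose $M^{*}$ is a $0$--$1$ matrix reached from $M$ by such transfers, with row sums $a$, anti-diagonal all $1$'s, column--sum vector $c^{*}\unrhd b$, but $c^{*}\neq b$. Let $p$ be the first index with $c^{*}_{p}\neq b_{p}$; since $c^{*}\unrhd b$ and the two vectors agree on $\{1,\dots,p-1\}$, in fact $c^{*}_{p}>b_{p}$. Let $q$ be the smallest index with $q>p$ and $c^{*}_{q}<b_{q}$; such a $q$ exists because the total sums agree and the surplus $\sigma(r):=\sum_{i\le r}(c^{*}_{i}-b_{i})$ satisfies $\sigma(p)\ge 1$ and $\sigma(y)=0$. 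Here is the crucial point: since $b$ is non-increasing and $c^{*}_{p}>b_{p}\ge b_{q}>c^{*}_{q}$, we get $c^{*}_{p}\ge c^{*}_{q}+2$, so the number of rows having a $1$ in column $p$ and a $0$ in column $q$ is at least $c^{*}_{p}-c^{*}_{q}\ge 2$. At most one of these rows is row $l-p+1$ (the row whose anti-diagonal cell lies in column $p$; if $p>l$ there is no such row at all), so I can pick such a row $i$ with $i\neq l-p+1$ and perform the transfer $m^{*}_{i,p}:=0$, $m^{*}_{i,q}:=1$.

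It then remains to verify the invariants and termination, which I expect to be routine. The transfer preserves every row sum and alters no anti-diagonal cell: we only delete a non-anti-diagonal $1$ from column $p$, and creating a new $1$ in column $q$ cannot violate the anti-diagonal condition. The new column--sum vector still majorises $b$: the partial sums $\sigma(r)$ with $r<p$ or $r\ge q$ are unchanged, while for $p\le r<q$ they drop by exactly $1$ but were all $\ge 1$, because the choice of $q$ forces $c^{*}_{i}\ge b_{i}$ for $p\le i\le q-1$, whence $\sigma(r)\ge\sigma(p)\ge 1$ there. Finally, the potential $\Phi(M^{*}):=\sum_{j}|c^{*}_{j}-b_{j}|$ strictly decreases at each step --- it drops by $2$, since $c^{*}_{p}-1\ge b_{p}$ and $c^{*}_{q}+1\le b_{q}$ --- and $\Phi\ge 0$, so after finitely many transfers we reach a matrix whose column--sum vector is exactly $b$; that matrix is the required $M'$.

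The only genuinely delicate step is the one flagged above: when transferring a $1$ out of column $p$ we must avoid the forbidden anti-diagonal cell in that column. The bound $c^{*}_{p}\ge c^{*}_{q}+2$ --- which is precisely where the hypothesis that $b$ is non-increasing, together with $c^{*}_{p}>b_{p}$ and $c^{*}_{q}<b_{q}$, gets used --- guarantees at least two candidate rows and hence one that is not forbidden. If one strips the anti-diagonal constraint, the argument degenerates to the familiar ``Robin Hood transfer'' proof of the Gale--Ryser theorem, with $c(M)$ playing the role usually played by the conjugate partition of $a$.
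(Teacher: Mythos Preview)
Your argument is correct and is essentially the same as the paper's: start from the modified Ferrers matrix $M$ and repeatedly apply the Gale--Ryser ``shift a $1$ from a surplus column to a deficit column'' step, using the inequality $c^{*}_{p}\ge c^{*}_{q}+2$ (which comes precisely from the monotonicity of $b$) to guarantee at least two candidate rows and hence one that avoids the anti-diagonal cell in column $p$. The only cosmetic differences are that the paper invokes Krause's proof of Gale--Ryser for the transfer claim rather than spelling it out, and measures progress with the Euclidean norm $\lVert c(M)-b\rVert$ instead of your $L^{1}$ potential $\Phi$.
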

\begin{proof}
The proof is based on the proof of sufficiency of Gale-Ryser theorem \cite{gale}. That proof is constructive. It starts from Ferrers matrix and requires finite number of applications of the following claim proved in \cite{gale}.
\begin{claim}
Given $l\times y$ matrix $M$ of zeros and ones such that $r(M) = a$, $c(M) \unrhd b$ and $c(M) \neq b$, we can find a $l\times y$ matrix $M'$ of zeros and ones such that $r(M') = a$, $c(M') \unrhd b$, and $\| c(M')-b \| < \| c(M)-b\|$ $($where $\| \cdot \|$ is the ordinary Euclidean norm$)$.
\label{clGale}
\end{claim}

Matrix $M'$ is constructed in the following way. First, we choose $i$ as minimal such that sum of $i$-th column of $M$ is greater than $q_i$ and choose $j$ as minimal such that sum of $j$-th column of $M$ is lower than $q_j$. Next, we can find a row index $h$ (there are at least two choices) such that $m_{h,i}=1$ and $m_{h,j}=0$. Matrix $M'$ is build from $M$ by swapping elements $m_{h,i}$ and $m_{h,j}$.
\\To prove Theorem~\ref{mFn} we start from modified Ferrers matrix and carefully apply Claim \ref{clGale}. In every step, when we choose row $h$ we take one such that $a_{h,i}$ is not of the form $a_{i',l-i'+1}$ (cf. Fig.~(\ref{fig:ex2}b)). 
\end{proof}

\begin{lemma}\label{claim1}
If  vector $b$ from Theorem \ref{mFn} fulfills the following conditions: $b_1 - b_y \leq 2$ while $b_i-b_y \leq 1$ for all $i \geq 2$, and $y \geq l$, then $b$ is dominated by $c(M)$.\label{cl1}
\end{lemma}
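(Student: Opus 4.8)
The plan is to verify the domination inequality $\sum_{j=1}^{r} c_j(M)\ \ge\ \sum_{j=1}^{r} b_j$ for every $r\ge 1$. For $r\ge y$ both partial sums equal $S:=a_1+\cdots+a_l=b_1+\cdots+b_y$, so it suffices to treat $1\le r\le y-1$.

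First I would pin down the partial column sums of $M$. The modified Ferrers matrix for $a$ is obtained from the ordinary Ferrers matrix $F$ of the same $l\times y$ shape by relocating, in every row $i$ with $a_i<l-i+1$, a single $1$ from column $a_i$ to column $l-i+1$; since $l-i+1>a_i$, each such move only pushes mass to the right. Because $c(F)$ is the conjugate sequence of $a$, and because the inequality $a_i<l-i+1$ is implied by $a_i\le r<l-i+1$, this gives
\[\sum_{j=1}^{r} c_j(M)\;=\;\sum_{j=1}^{r} c_j(F)\;-\;\bigl|\{\,i\in[l]:a_i\le r<l-i+1\,\}\bigr|\;=\;\sum_{i=1}^{l}\min(a_i,r)\;-\;\bigl|\{\,i\le l-r:a_i\le r\,\}\bigr|.\]

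Next I would bound $\sum_{j=1}^{r} b_j$ from above. Writing $t:=b_y$, the hypotheses give $b_1\le t+2$, $b_i\le t+1$ for $i\ge 2$, and $b_i\ge t$ for all $i$; in particular, for $1\le r\le y-1$ one has $\sum_{j=1}^{r}b_j=S-\sum_{j=r+1}^{y}b_j\le S-(y-r)t$, and more precisely $\sum_{j=r+1}^{y}b_j\ge(y-r)t+\big(\,|\{j:b_j\ge t+1\}|-r\,\big)^{+}$. Combining with the previous display, it is enough to prove
\[\sum_{i=1}^{l}(a_i-r)^{+}\;+\;\bigl|\{\,i\le l-r:a_i\le r\,\}\bigr|\;\le\;(y-r)t\;+\;\big(\,|\{j:b_j\ge t+1\}|-r\,\big)^{+},\]
where I used $S-\sum_i\min(a_i,r)=\sum_i(a_i-r)^{+}$.

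I would then establish this inequality by a short case analysis on $r$ versus $l$, and inside each case on $A:=|\{i:a_i>r\}|$. The ingredients are elementary: each row of an $l\times y$ $0$--$1$ matrix has at most $y$ ones, so $(a_i-r)^{+}\le y-r$; each $a_i\ge 1$, so $\sum_{i>A}a_i\ge l-A$; $y\ge l$; and, decisively, $S=\sum_i a_i=\sum_j b_j\le y(t+1)$, which simultaneously caps how large and how numerous the ``big'' parts $a_i$ can be and ties $|\{j:b_j\ge t+1\}|$ to $S-yt$. When $r\ge l$ the relocation term on the left vanishes and one only needs $\sum_i(a_i-r)^{+}\le\sum_{j>r}b_j$; when $r<l$ one additionally absorbs the at most $l-r-A$ relocated ones against the slack furnished by $\sum_{i>A}a_i\ge l-A$. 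The main obstacle is exactly this last step in the regime where several parts $a_i$ are close to $y$ (so that the total excess $\sum_i(a_i-r)^{+}$ is large) while $b$ carries the extra ``$+2$'' at $b_1$: the crude per-part bound $(a_i-r)^{+}\le y-r$ does not suffice by itself, and one must play it off against the global identity $\sum_i a_i=\sum_j b_j$ together with the near-flatness of $b$ to keep the left-hand side below $(y-r)t+(|\{j:b_j\ge t+1\}|-r)^{+}$.
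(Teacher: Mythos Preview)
Your reduction is clean and the displayed identity for $\sum_{j\le r}c_j(M)$ is correct, but the proof is not finished: the ``short case analysis'' in the final paragraph is never carried out, and you yourself flag a ``main obstacle'' (the regime where several $a_i$ are close to $y$ and $b$ carries the extra $+2$ at $b_1$) without resolving it. Saying that ``one must play it off against the global identity $\sum_i a_i=\sum_j b_j$ together with the near-flatness of $b$'' is a description of what a proof would have to do, not the proof itself. As it stands, the hardest case of the inequality
\[
\sum_{i}(a_i-r)^{+}+\bigl|\{\,i\le l-r:a_i\le r\,\}\bigr|\ \le\ (y-r)t+\bigl(|\{j:b_j\ge t+1\}|-r\bigr)^{+}
\]
is simply asserted.

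The paper avoids all of this by exploiting one structural fact that your computation never uses: the column-sum vector $c(M)$ of the modified Ferrers matrix is \emph{non-increasing}. With that in hand the argument is a two-line contradiction. First one checks $c_1^M\ge b_1$ (the paper uses $c_1^M=l$ together with $y\ge l$ and the semi-balanced shape of $b$). Now suppose domination fails and let $s$ be the first index where $\sum_{i\le s}c_i^M<\sum_{i\le s}b_i$; then $s\ge 2$ and $c_s^M<b_s$. Because the totals agree there must exist $x>s$ with $c_x^M>b_x$. But $c(M)$ non-increasing gives $c_x^M\le c_s^M\le b_s-1$, while the hypothesis $b_i-b_y\le 1$ for $i\ge 2$ gives $b_s-1\le b_x$; hence $c_x^M\le b_x$, a contradiction. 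No per-row bounds, no splitting on $r$ versus $l$, no tracking of relocated ones.

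So the missing idea is precisely monotonicity of $c(M)$: once you have it, the near-flatness of $b$ (gap at most $1$ from the second entry on) immediately rules out any crossing of partial sums after the first coordinate. I would replace the entire last paragraph with a verification that $c(M)$ is non-increasing and then the contradiction argument above.
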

\begin{proof}
Let $c(M)=(c_1^M, \ldots,c_y^M)$. Since $c_1^M=l$ (due to the definition of modified Ferrers matrix), $y \geq l$, $\sum_{i=1}^l a_i=
\sum_{i=1}^y c_i^M=\sum_{i=1}^y b_i$
and vector $b$ is semi-balanced, i.e. fulfills the assumption of this lemma, we are sure that $c_1^M \geq b_1$. Thus, we have $\sum_{i=1}^r c_i^M \geq \sum_{i=1}^r b_i$ for all $r \leq s$ for some $s$, $1 < s \leq y$. To the contrary, let us assume that $b$ is not dominated by $c(M)$, i.e. $s<y$, $\sum_{i=1}^s c_i^M < \sum_{i=1}^s b_i$, and hence $c_s^M <b_s$. Since $\sum_{i=1}^l c_i^M=\sum_{i=1}^y b_i$, there must exist an index $x>s$ such that $c_x > b_x$. Since $b_s-b_x \leq 1$ and vector $c(M)$ is non-increasing, such situation is impossible. \end{proof}

\subsection{The main result}

We are ready to prove 
\begin{theorem}
\label{thm:alpha=alphamin} Let $G$ be a well-covered block graph. Then $G$ is equitably $k$-colorable for all $k\geq \omega(G)$.  
\end{theorem}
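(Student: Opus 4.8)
The plan is to induct on the number of vertices of $G$, using the recursive characterization of well-covered block graphs from Theorem~\ref{thm:characAlpha=AlphaMin}. The base case is a clique of size at most $\omega(G)$, which is trivially equitably $k$-colorable for every $k\geq\omega(G)$: just assign distinct colors to the vertices. For the inductive step, write $G$ as obtained from a smaller well-covered block graph $H$ by the operation: we pick a vertex $v\in V(H)$, attach an internal clique $Q$ of size $s$ with $V(Q)=\{v,v_1,\ldots,v_{s-1}\}$, and attach a pendant clique $Q_i$ of size $q_i$ to each $v_i$. By induction, $H$ has an equitable $k$-coloring $\varphi$ for every $k\geq\omega(H)$, and since $\omega(H)\leq\omega(G)$ and $k\geq\omega(G)$, we may start from such a $\varphi$ and extend it to $G$.

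The heart of the argument is the extension step, and this is where the modified Ferrers machinery enters. Having fixed the color $c_0:=\varphi(v)$, the new vertices $v_1,\ldots,v_{s-1}$ must receive $s-1$ \emph{distinct} colors, all different from $c_0$; equivalently, the clique $Q$ uses a set $S$ of $s$ colors with $c_0\in S$. Each pendant clique $Q_i$ has $q_i$ vertices, one of which is $v_i$ already colored, so we must color the remaining $q_i-1$ simplicial vertices of $Q_i$ with $q_i-1$ distinct colors avoiding $\varphi(v_i)$ — and here the simplicial vertices are free to reuse any colors, including $c_0$, subject only to the within-clique distinctness. The key observation is that the whole assignment on the newly added $\sum_{i=1}^{s-1}q_i$ vertices of $Q\setminus\{v\}$ together with all pendant cliques is governed by how many times each color is used, and these counts must interlace with the color counts already present in $\varphi$ so that the final color classes are balanced to within one. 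I would encode the ``demand'' vector $q=(q_1',\ldots,q_s')$ (with the convention $q_s'=1$ for the vertex $v$ itself, as set up before Theorem~\ref{thm:characAlpha=AlphaMin}) and form its modified Ferrers matrix $M$; the rows of $M$ correspond to the cliques $Q_1,\ldots,Q_{s-1}$ (plus the slot for $v$), and the constraint ${m'}_{i,l-i+1}=1$ is precisely the requirement that vertex $v_i$ keeps a prescribed color and that the colors within each clique are pairwise distinct. Then the target column-sum vector $b$ records, for each color, how much ``extra'' capacity we want to fill so as to equalize the global color-class sizes: one checks that the residual demands are semi-balanced in the sense of Lemma~\ref{cl1} — namely $b_1-b_y\leq 2$ and $b_i-b_y\leq 1$ for $i\geq 2$ — because the classes of $\varphi$ were already equitable and the number of colors is at least $\omega(G)\geq s$ so $y\geq l$. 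Lemma~\ref{cl1} then gives $b\unrhd c(M)$, Theorem~\ref{mFn} produces the $0$--$1$ matrix $M'$ with the right row and column sums and with ${m'}_{i,l-i+1}=1$, and translating $M'$ back yields a proper coloring of the added vertices that, combined with $\varphi$, is equitable on $G$.

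The main obstacle, which I would treat carefully, is the bookkeeping that turns the equitability requirement on $G$ into the precise semi-balanced condition on $b$. Two points need attention: first, one must verify that the color $c_0=\varphi(v)$ can be made to play the role of the ``protected diagonal'' entry consistently across the whole construction — i.e. that the modified Ferrers formalism correctly forbids exactly the conflicts at $v$ and within each clique while allowing everything else; second, one must choose which colors of $\varphi$ are ``heavy'' (used $\lceil|V(H)|/k\rceil$ times) and which are ``light,'' and check that after distributing the $\sum q_i' - s$ or so new vertices the deficits relative to the uniform target never spread out by more than the allowed amount, so that Lemma~\ref{cl1} applies. A minor subtlety is the edge case where $k$ is much larger than $|V(G)|$, so that some color classes are empty or singletons; here the inequality $y\geq l$ and the $b_1-b_y\leq 2$ slack must still be checked directly, but these are small cases handled by the same counting. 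Once the translation is pinned down, the induction closes and we conclude $\chi_=(G)\leq\omega(G)$, which together with the lower bound~\eqref{lower_bound} and the chain of inequalities at the start of the section confirms Conjecture~\ref{conj:gap1} for well-covered block graphs.
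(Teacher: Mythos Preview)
Your proposal is correct and follows essentially the same route as the paper: induction via the recursive characterization of Theorem~\ref{thm:characAlpha=AlphaMin}, then the modified Ferrers matrix together with Lemma~\ref{cl1} and Theorem~\ref{mFn} to realize the required color multiplicities on the attached cliques. The obstacle you flag---arranging that the color of $v$ coincides with the largest entry $p_1$ so that the semi-balanced hypothesis of Lemma~\ref{cl1} is met---is precisely what the paper isolates and proves as a short recoloring claim (Claim~\ref{clrec}); note also the slip in direction, as Lemma~\ref{cl1} yields $c(M)\unrhd b$, not $b\unrhd c(M)$.
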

\begin{proof}
We use the recursive characterization of well-covered block graphs. It is clear that the theorem holds for the basic class of well-covered block graphs, i.e. for cliques. We assume that we have an equitable $k$-coloring of $H$ (coloring $f_c$) and we show that it is possible to extend it into the entire graph $G$ for every $k\geq \omega(G) \geq \omega(H)$.

The graph $G$ is given by a graph $H$ and the vector $q$ that describes the sizes of added cliques (cf. the characterization given in Subsection \ref{character}). Since an extended $k$-coloring of $G$ must be equitable, we can calculate how many times each color $i$ should be used in a coloring of $G-H+v$, $i\in [k]$, taking into account the equitable $k$-coloring of $H$. Let $p=(p_1,\ldots, p_k)$ be a non-increasing vector of cardinalities of color classes in a coloring of $G-H+v$. Moreover, let $col=
(c_1,\ldots, c_k)$ be a vector of colors such that the cardinality of color $c_i$ is equal to $p_i$ in a desirable coloring of $G-H+v$ (cf. Fig. \ref{fig:ex}). Note that there is at most one term in the sequence $p$ ($p_1$), such that there is a term $p_j$ fulfilling inequality $p_1-p_j=2$. For the remaining terms we have $|p_i-p_j|\leq 1$, and $i,j\geq 2$.
\begin{figure}[t]
    \centering
    \includegraphics[scale=0.8]{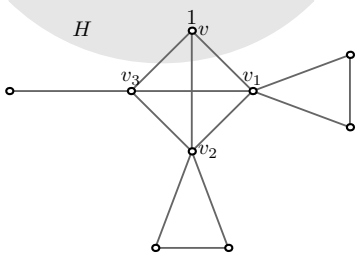}
    \caption{An example of graph $G-H+v$ with a counter of an equitable 6-coloring of $H$ given by the sequence $(10,10,10,10,9,9)$. Then $q=(3,3,2,1)$ and $p=(2,2,2,1,1,1)$ while the vector of the corresponding colors is, for example, as follows: $col=(1,5,6,2,3,4)$. }
    \label{fig:ex}
\end{figure}

On the further purpose we claim that the following property of the coloring of $H$ holds.
\begin{claim}
Graph $H$ may be recolored in such a way that the color assigned to $v$ is $1$ and term $p_1$ in vector $p$ represents color $1$, i.e. $c_1=1$. \label{clrec}
\end{claim}
\begin{proof} 
All we need is to prove that after the recoloring of graph $H$ color 1 is one of the colors with a largest cardinality in the desirable coloring of $G-H+v$.

Since the initial $k$-coloring of graph $H$ is equitable then the difference  between cardinalities of any two color classes is at most 1. 
Let us consider two cases: (1) the initial color of $v$, $f_c(v)$, is a color of a largest cardinality in the coloring of $H$. Then we do not have a term in the sequence $p$ such that the difference between this term and any other term from this sequence is 2. Moreover, the cardinality of color $f_c(v)$ is represented by a term of the sequence $p$ with the highest value. Thus, we may recolor graph $H$ in such a way that $v$ will obtain 1 and $q_1$ represents color 1; (2) the initial color $f_c(v)$ is a color of the smallest cardinality in the coloring of $H$. In this case we have two possible situations:
    \begin{itemize}
        \item we do not have a term in $p$ that exceeds any other term by exactly 2 - then the cardinality of color $f_c(v)$ is represented by a term of the sequence $p$ with the highest value,
        \item we have a term in $p$ that exceeds any other term by exactly 2, but it corresponds to color $f_c(v)$.  
    \end{itemize}
    In both cases, after recoloring $v$ to 1, $p_1$ represents color 1.
\end{proof}

To achieve the desirable equitable $k$-coloring of $G$ we start from the equitable $k$-coloring of $H$ such that vertex $v$ is colored with 1. Next, we build the modified Ferrers matrix for vector $q$. Observe that there is at most one term in the sequence $p$ ($p_1$), such that there is a term $p_j$ fulfilling inequality $p_1-p_j=2$, while for the remaining terms we have $|p_i-p_j|\leq 1$, $i,j\geq 2$. By Lemma \ref{claim1}, we may apply Theorem \ref{mFn} - vector $q$ corresponds to vector $a$ while vector $p$ corresponds to vector $b$. As a result, we get a $0$--$1$ matrix $M'$ such that $r(M')=q$ and $c(M')=p$ (cf. Fig.~\ref{fig:ex2}b)). Now, the desirable coloring of $G-H$ can be read from the matrix $M'$ in the following way. Let the pendant clique $Q_i$ of size $q_i$ be represented by the term $q_j'$ in sequence $q$. Then, $f_c(v_i):=c_{l-j+1}$, and the remaining vertices in $Q_i$ are colored with all colors $c_u$ such that $m'_{j,u}=1$ and $u \neq l-j+1$ (cf. Fig.~\ref{fig:ex42}). 
\end{proof}
\begin{figure}[t]
    \centering
    \includegraphics[scale=0.8]{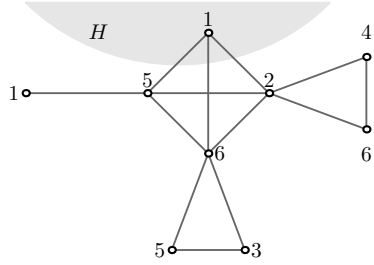}
    \caption{An equitable coloring of $G-H+v$ read from matrix $M'$ given in Figure \ref{fig:ex2}b). }
    \label{fig:ex42}
\end{figure}

\begin{corollary}
The equitable chromatic spectrum of well-covered block graphs is gap-free.
\end{corollary}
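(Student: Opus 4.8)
The plan is to read the corollary straight off Theorem~\ref{thm:alpha=alphamin} together with the elementary lower bound~\eqref{lower_bound}. Recall that the equitable chromatic spectrum of $G$ is \emph{gap-free} precisely when $\chi_=^*(G)=\chi_=(G)$, so it suffices to determine both quantities for a well-covered block graph $G$ and check they coincide.

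First I would pin down $\chi_=(G)$. Theorem~\ref{thm:alpha=alphamin} asserts that a well-covered block graph $G$ is equitably $k$-colorable for every $k\ge\omega(G)$; taking $k=\omega(G)$ gives $\chi_=(G)\le\omega(G)$. Combined with the trivial inequality $\chi_=(G)\ge\omega(G)$ furnished by~\eqref{lower_bound}, this yields $\chi_=(G)=\omega(G)$. Next I would bound $\chi_=^*(G)$: since $G$ admits an equitable $t$-coloring for \emph{all} $t\ge\omega(G)$ (again Theorem~\ref{thm:alpha=alphamin}), the definition of the equitable chromatic threshold gives $\chi_=^*(G)\le\omega(G)$. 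On the other hand, $\chi_=^*(G)\ge\chi_=(G)$ holds for every graph, because the existence of an equitable $t$-coloring for every $t\ge\chi_=^*(G)$ in particular produces one using $t=\chi_=^*(G)$ colors. Chaining these, $\omega(G)=\chi_=(G)\le\chi_=^*(G)\le\omega(G)$, so $\chi_=^*(G)=\chi_=(G)$ and the spectrum is gap-free.

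I do not expect any genuine obstacle: the entire substance is already contained in Theorem~\ref{thm:alpha=alphamin}, and the corollary is the short bookkeeping above. The only point worth stating explicitly is the definitional inequality $\chi_=(G)\le\chi_=^*(G)$, which closes the chain.
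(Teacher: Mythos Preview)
Your proposal is correct and matches the paper's approach: the corollary is stated there without proof, as an immediate consequence of Theorem~\ref{thm:alpha=alphamin}, and your write-up just makes explicit the trivial bookkeeping that $\chi_=(G)=\chi_=^*(G)=\omega(G)$.
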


\section{Symmetric-like block graphs}\label{symm}

In this section, we present some results that deal with symmetric-like block graphs. A \emph{symmetric block graph} $B(n,k)$ is a block graph whose all blocks are cliques of size $n$, each cut vertex belongs to exactly $k$ blocks, and the eccentricity of simplicial vertices is same. Inspired by this definition, we define some other subclasses of block graphs and confirm Conjecture \ref{conj:gap1} for them.

\subsection{Class $B_l(n,k)$, $k\geq 3$}

We consider a subclass of block graphs that is defined recursively. Let $B_1(n,k)$ be a clique of size $n$. A graph $B_l(n,k)$, $l\geq 2$, is built from $B_{l-1}(n,k)$ by adding to each simplicial vertex $k-1$ pendant cliques, each one of size $n$ (cf. Fig. \ref{fig:l3} with $n=k=l=3$). 

\begin{figure}
    \centering
    \includegraphics{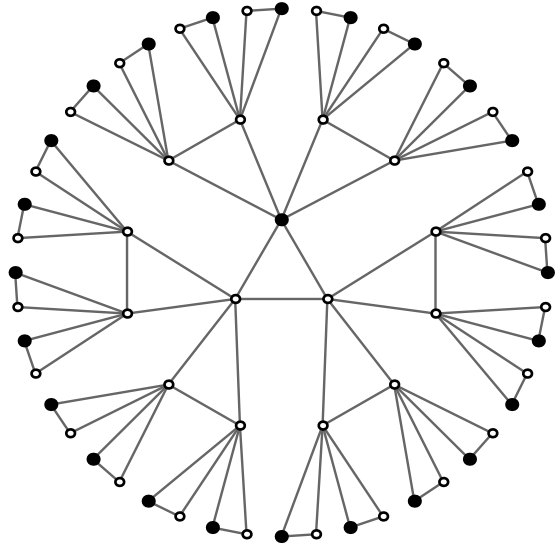}
    \caption{Graph $B_3(3,3)$ with an exemplary independent set of the largest size depicted by black vertices.}
    \label{fig:l3}
\end{figure}

Note that $$|V(B_l(n,k))|=\Sigma _{x=0}^{l-1} n(k-1)^x(n-1)^x.$$
It is easy to observe that the largest independent set in a graph $B_l(n,k)$ is formed by simplicial vertices - one such a vertex from each pendant clique - and appropriate cut vertices from every second ''level'' (cf. Fig.~\ref{fig:l3}). Thus, we have 
$$\alpha(B_l(n,k))=\left \{ 
\begin{array}{ll}
    1 & \text{if }l=1,\\
   1+\Sigma_{x=1}^{l-2} n (k-1)^{x+1}(n-1)^x  & \text{if } l\geq 3 \text{ and } l \text{ is odd,} \\
\Sigma_{x=0}^{l-2} n (k-1)^{x+1}(n-1)^x     & \text{if }l \text{ is even.}
\end{array}
\right.$$

\begin{lemma}
For every graph $G$ being a $B_l(n,k)$, $l\geq 2$, we have
$$\alpha(G)-\alpha_{\min}(G)=k-2.$$
\end{lemma}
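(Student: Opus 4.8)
The plan is to compute both quantities explicitly and compare. For $\alpha(G)$ with $G = B_l(n,k)$ we already have the closed form stated above, so the only real work is to determine $\alpha_{\min}(G)$, that is, to identify a vertex $v$ minimizing $\alpha(G,v)$ and to compute $\alpha(G,v)$ for it. Intuitively, the ``worst'' vertex is a cut vertex of the innermost clique $B_1(n,k)$ (a central vertex), since choosing such a vertex into an independent set blocks the entire central clique and forces us to start the alternating pattern of ``levels'' one step later on every branch; every branch hanging off the central clique then contributes fewer simplicial/cut vertices than it would in a maximum independent set. First I would make this precise: fix a vertex $v$ in the central clique, observe that $G - N[v]$ decomposes into pieces, and count the largest independent set compatible with taking $v$. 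The recursive structure of $B_l(n,k)$ — each simplicial vertex of $B_{l-1}(n,k)$ spawns $k-1$ fresh cliques of size $n$ — makes this a routine (if slightly tedious) geometric-sum computation, entirely parallel to the one already used to derive the formula for $\alpha(B_l(n,k))$.

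Next I would carry out the subtraction. Both $\alpha(G)$ and $\alpha(G,v)$ are sums of the same type of terms $n(k-1)^{x+1}(n-1)^x$, just with the summation index shifted by one level; after cancellation essentially everything telescopes away and one is left with a single surviving term depending only on $k$ (the difference coming from the central clique, where instead of being able to place one vertex ``for free'' as in the max independent set construction at an even level, choosing the central $v$ costs us the $k-1$ neighbours of $v$ in that clique while $v$ itself only refunds $1$). That should yield $\alpha(G) - \alpha(G,v) = k-2$ directly. To finish, I must verify that no vertex does worse than this central vertex, i.e. that $\alpha_{\min}(G)$ is actually attained here; for any other vertex (a simplicial vertex, or a cut vertex at a deeper level) one checks that the amount of ``damage'' to the alternating pattern is no greater, so $\alpha(G,w) \ge \alpha(G,v)$ for all $w$. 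This can be argued uniformly by noting that a deeper vertex blocks a smaller subtree, and a simplicial vertex blocks only its own clique.

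The main obstacle I anticipate is the bookkeeping in the parity-dependent formula for $\alpha(B_l(n,k))$: because the maximum independent set uses cut vertices from ``every second level,'' the expressions differ according to whether $l$ is odd or even, and correspondingly the independent set realizing $\alpha(G,v)$ for the central vertex $v$ will have its levels shifted, so one has to check the $k-2$ answer separately in the two parity cases (and handle the small cases $l=2,3$ by hand). I would organize the write-up so that the core identity — ``shifting the alternating pattern by one level, forced by committing the central vertex, costs exactly $(k-1)-1 = k-2$'' — is isolated as the conceptual heart, with the parity cases reduced to a short verification that the telescoping works the same way in each.
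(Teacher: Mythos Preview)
There is a substantive error in your identification of the vertex realizing $\alpha_{\min}(G)$. You claim the worst vertex is a cut vertex of the innermost (central) clique, but this fails whenever $l\geq 3$ is odd: in that case a maximum independent set of $B_l(n,k)$ already \emph{contains} a central vertex (this is precisely the ``$+1$'' in the odd-$l$ formula for $\alpha$), so for central $v$ one has $\alpha(G,v)=\alpha(G)$ and your computed difference would be $0$, not $k-2$. Your intuitive argument that committing a central vertex ``forces the alternating pattern one step later on every branch'' has the parity exactly backwards: for odd $l$ the optimal alternation naturally starts at the center. Consequently your final verification step --- checking that no vertex does worse than the central one --- would immediately fail already on $B_3(n,k)$.

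The paper's (one-line) proof locates the minimizer at the opposite extreme: it is the cut vertex $w$ of a \emph{pendant} (outermost) clique. Forcing $w$ into an independent set kills the $k-1$ level-$l$ cliques attached to $w$, each of which would otherwise contribute one simplicial vertex to a maximum independent set; $w$ itself refunds $1$, and everything deeper in the graph can be kept as in a maximum independent set (with at most a local swap of the neighbouring choice). This yields $\alpha(G)-\alpha(G,w)=k-2$ directly and uniformly in the parity of $l$, so the parity bookkeeping you anticipated as the main obstacle does not arise at all. To salvage your plan, replace the central vertex by this outer cut vertex throughout; the computation then needs no telescoping of level sums.
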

\begin{proof}
It is enough to observe that an independent set of the smallest size is designated by a cut vertex from any pendant clique.
\end{proof}

\begin{lemma}
For every graph $G$ being a $B_l(n,k)$, $l\geq 1$, we have $$\max \left\{\omega(G), \left \lceil \frac{|V(G)|+1}{\alpha_{min}(G)+1}\right \rceil\right\}=\omega(G).$$
\end{lemma}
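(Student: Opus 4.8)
The plan is to show $\omega(G)=n$ dominates the ceiling term, i.e. to establish the inequality
\[
\left\lceil \frac{|V(G)|+1}{\alpha_{\min}(G)+1}\right\rceil \le n,
\]
since $\omega(G)=n$ for every $B_l(n,k)$ with $l\ge 2$ (and trivially for $l=1$, where $G=K_n$, $|V(G)|=n$, $\alpha_{\min}=1$, so the ceiling is $1\le n$). By the previous lemma we know $\alpha_{\min}(G)=\alpha(G)-(k-2)$, and closed forms for both $|V(G)|$ and $\alpha(G)$ are already recorded in the excerpt, so the whole statement reduces to a purely arithmetic inequality $|V(G)|+1 \le n\,(\alpha_{\min}(G)+1)$, which I would verify by induction on $l$ (treating the parities $l$ even / $l$ odd in the induction step, since $\alpha$ has a parity-dependent formula).

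First I would set up the recursion cleanly. Write $N_l=|V(B_l(n,k))|$, $A_l=\alpha(B_l(n,k))$, and note the structural recursion: passing from $B_{l-1}$ to $B_l$ adds, to each of the $s_{l-1}$ simplicial vertices of $B_{l-1}$, exactly $k-1$ new cliques of size $n$, i.e. $(k-1)(n-1)$ new vertices per old simplicial vertex, and the number of simplicial vertices satisfies $s_l=s_{l-1}(k-1)(n-1)$ with $s_1=n$. This gives $N_l=N_{l-1}+s_{l-1}(k-1)(n-1)$, matching the displayed sum $N_l=\sum_{x=0}^{l-1} n(k-1)^x(n-1)^x$. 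Similarly I would re-derive the recursion for $A_l$: a largest independent set picks one simplicial vertex from each pendant clique at the outermost level and then an optimal independent set of the graph obtained by deleting those outermost cliques together with their attachment cut-vertices; this is where the ``every second level'' phenomenon and the parity split come from.

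Then the core step is the arithmetic inequality $N_l+1\le n(A_l-(k-2)+1)=n(A_l-k+3)$. I would prove it by induction on $l$, using the two recursions above; in the inductive step the key cancellation should be that the newly added vertices on the outermost level, $s_{l-1}(k-1)(n-1)$, are comfortably absorbed by $n$ times the number of newly added simplicial vertices chosen into the independent set, namely $n\cdot s_{l-1}(k-1)(n-1)/(n-1) \cdot$ (something) — more precisely one gets a factor of $n$ against a factor of $n-1$, giving slack. The base cases are $l=1$ and $l=2$ (and perhaps $l=3$ to start the parity induction), all of which are direct substitutions into the closed forms.

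The main obstacle I anticipate is bookkeeping around the parity-dependent formula for $\alpha$: the inductive step has to go from $l$ to $l+2$ (not $l$ to $l+1$) to keep the parity fixed, or else one carefully relates $A_{l}$ and $A_{l-1}$ across a parity change, and one must check that the inequality is not merely true but holds with enough margin to survive the recursion — in particular the $l=2$ (even) case is the tightest, since there $\alpha_{\min}$ is smallest relative to $|V|$, so I would check that boundary case first and make sure the ``$+1$'' in the numerator does not tip the ceiling over $n$. Once the inequality $N_l+1\le n(\alpha_{\min}(G)+1)$ is in hand, taking ceilings and combining with $\omega(G)=n$ finishes the proof.
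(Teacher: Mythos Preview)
Your plan is sound and would succeed, but it is a genuinely different route from the paper's own argument, so let me briefly compare.

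The paper does not use induction on $l$. Instead it plugs the closed geometric-series formulas for $|V(G)|$ and $\alpha_{\min}(G)$ directly into the target inequality $\frac{|V(G)|+1}{\alpha_{\min}(G)+1}\le n$, clears denominators, and reduces everything (in the even-$l$ case) to a single auxiliary inequality in $n$ and $k$ alone, namely $2nk\le n+k+(k-1)n^2$, which it checks by hand for $n=3$ and via $\frac{2k}{k-1}\le n$ for $n\ge 4$; the odd-$l$ case is declared analogous. So the paper's proof is a one-shot algebraic verification using the closed forms, with no induction and no recursions.

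Your inductive route, by contrast, sidesteps the messy closed forms entirely. Once one checks (as you suggest) the base cases and sets up the step from $l$ to $l+2$, the increment inequality $N_{l+2}-N_l\le n(A_{l+2}-A_l)$ factors completely and collapses to $n-1\le n$; so the induction is actually cleaner than your hedged description (``$\cdot$ (something)'') suggests, and the anticipated ``bookkeeping obstacle'' around parity is mild. What the paper's approach buys is that it isolates exactly where the hypotheses $n\ge 3$, $k\ge 3$ are used (in the auxiliary inequality), whereas your induction hides this in the base cases; what your approach buys is that the inductive step is essentially trivial and avoids the somewhat opaque chain of ``elementary manipulations'' the paper invokes. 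Either way, you should make the base cases $l=2$ and $l=3$ explicit rather than leaving them as ``direct substitutions,'' since that is where any tightness lives.
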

\begin{proof}
Observe that in the construction of block graph $B_l(n,k)$ we always work with cliques of size $n$, hence for such graphs, we have $\omega(G)=n$. Thus, in order to prove the above equality, it suffices to show that
\[\frac{|V(G)|+1}{\alpha_{min}(G)+1}\leq n.\]
We split the proof of this into three cases. The first case is when $l=1$. In this case, we have a clique of size $n$, and the inequality is trivially true. Now, we prove when $l$ is even. Hence $l\geq 2$. As we stated in the introduction, we can assume that $n\geq 3$ and $k\geq 3$.

In this case, observe that 
\begin{equation}\label{eq:AUX}
    2nk\leq n+k+(k-1)n^2.
\end{equation}
In order to see this, observe that for case $n=3$, it follows from the fact that $k\geq 3$. On the other hand, if $n\geq 4$, then
\[\frac{2k}{k-1}=2+\frac{2}{k-1}\leq 4\leq n,\]
hence
\[2nk\leq (k-1)n^2,\]
which clearly implies the inequality (\ref{eq:AUX}). Now, by elementary manipulations, it can be shown that inequality (\ref{eq:AUX}) implies
\[n^2(k-1)+n(n-1)(k-1)(k-3)+n\leq (k-1)^2(n-1)^2+(k-1)^2(n-1)+k.\]
Since $l\geq 2$, we have
\[n^2(k-1)+n(n-1)(k-1)(k-3)+n\leq (k-1)^l(n-1)^l+(k-1)^l(n-1)^{l-1}+k.\]
Again, simple manipulations imply that this inequality is the same as
\[(k-1)(n-1)-1+n(k-1)^l(n-1)^l-n\leq n\cdot [(3-k)((k-1)(n-1)-1)+n(k-1)^l(n-1)^{l-1}-n(k-1)].\]
The latter implies
\[\frac{|V(G)|+1}{\alpha_{min}(G)+1}=\frac{1+n\cdot \frac{(k-1)^l(n-1)^l-1}{(k-1)(n-1)-1}}{3-k+n(k-1)\cdot \frac{(k-1)^{l-1}(n-1)^{l-1}-1}{(k-1)(n-1)-1}}\leq n,\]
which is the bound we were trying to prove. This just follows from the formula for the sum of members of a geometric progression and the above mentioned formulas for $|V|$, $\alpha$ and $\alpha_{\min}$. 

Finally, we are left with the case $l\geq 3$ and $l$ is odd. This case can be done in a way similar to the case $l$ even. 
\end{proof}

To confirm Conjecture \ref{conj:gap1} for $B_l(n,k)$ graphs we prove the following 
\begin{theorem}
Every block graph $G=B_l(n,k)$, $l\geq 1$, has an equitable $\omega(G)$-coloring.\label{thm:sym}
\end{theorem}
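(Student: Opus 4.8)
The plan is to argue by induction on $l$, mirroring the recursive definition of $B_l(n,k)$. The base case $l=1$ is a single clique $K_n$, which is trivially equitably $n$-colorable. For the inductive step, assume we have an equitable $n$-coloring of $B_{l-1}(n,k)$ and show it extends to an equitable $n$-coloring of $B_l(n,k)$. Since $B_l(n,k)$ is obtained from $B_{l-1}(n,k)$ by attaching $k-1$ pendant cliques of size $n$ to each simplicial vertex $w$ of $B_{l-1}(n,k)$, each new pendant clique $Q$ shares exactly the vertex $w$ with the rest of the graph, so the only constraint on coloring $Q\setminus\{w\}$ is that its $n-1$ vertices receive the $n-1$ colors distinct from the color of $w$. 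Thus the color multiset added by the pendant cliques of $w$ is completely forced up to the assignment within each clique: each of the $n$ colors is used exactly $k-1$ times among the new vertices hanging off $w$ (the color of $w$ appears $k-1$ times as ``the forbidden color skipped'', the other $n-1$ colors appear $k-1$ times each as actual vertices; counting actual new vertices, color $f_c(w)$ is used $0$ times at $w$'s gadget and every other color $k-1$ times — wait, I should be careful: each pendant clique contributes $n-1$ new vertices, one of each color except $f_c(w)$, and there are $k-1$ such cliques, so color $f_c(w)$ gains $0$ and every other color gains $k-1$ from $w$'s gadget).

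The key observation is then a global counting/balancing argument: summing over all simplicial vertices $w$ of $B_{l-1}(n,k)$, the total increment to color class $i$ is $(k-1)\cdot(\,|S|-s_i\,)$, where $S$ is the set of simplicial vertices of $B_{l-1}(n,k)$ and $s_i$ is the number of them colored $i$. For the resulting coloring of $B_l(n,k)$ to be equitable we need the new color class sizes to differ by at most one; since $|V(B_l)|$ may not be divisible by $n$, we actually need them to be $\lceil |V|/n\rceil$ or $\lfloor|V|/n\rfloor$. The strategy is: first choose the equitable $n$-coloring of $B_{l-1}(n,k)$ so that the simplicial vertices $S$ are themselves colored as equitably as possible — i.e. each color used either $\lceil |S|/n\rceil$ or $\lfloor |S|/n\rfloor$ times on $S$ — which forces the increments $(k-1)(|S|-s_i)$ to lie within $k-1$ of each other, and then, if needed, adjust within the pendant gadgets (which have total freedom in how colors are permuted among the $k-1$ cliques at each $w$) to repair residual imbalance. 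One can also use the freedom of recoloring the cliques of $B_{l-1}(n,k)$ itself, as in Claim~\ref{clrec}, to line up which colors are ``heavy.''

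The main obstacle I anticipate is controlling the imbalance precisely: the increment from the gadgets is a multiple of $k-1$, which for $k\ge 3$ is at least $2$, so the naive bound only gives that class sizes differ by up to $k-1$, not by at most $1$. The fix requires exploiting that we may distribute the new vertices' colors unevenly among the $k-1$ pendant cliques at a single vertex $w$ — at $w$ we have $k-1$ independent copies of ``all colors but $f_c(w)$,'' and while the total per color is fixed at $w$, we can shift a single vertex's color from one pendant clique to another only if it respects properness, which it always does since distinct pendant cliques are non-adjacent. Hence at each $w$ the contribution is rigidly $k-1$ per color $\neq f_c(w)$; the genuine slack lives in (i) the choice of coloring of $B_{l-1}(n,k)$, in particular the distribution of colors on $S$, and (ii) recoloring freedom. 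So the real content is a lemma asserting $B_{l-1}(n,k)$ admits an equitable $n$-coloring in which the simplicial-vertex color classes are also balanced; this is where I'd expect to invoke or re-run the Ferrers-matrix machinery of Theorem~\ref{mFn}, or argue directly by an exchange argument, and then verify that the induced counts $\lceil|V(B_l)|/n\rceil,\lfloor|V(B_l)|/n\rfloor$ match what the construction produces using the explicit formulas for $|V|$ and $\alpha$ already recorded above.
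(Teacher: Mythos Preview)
Your inductive skeleton matches the paper exactly: color $B_1=K_n$ with $n$ colors and extend level by level. Where you diverge is in anticipating an obstacle that does not exist; two quick observations collapse your ``main obstacle'' entirely.

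First, $|V(B_l(n,k))|=n\sum_{x=0}^{l-1}(k-1)^x(n-1)^x$ is always divisible by $n$, so there is no floor/ceiling discrepancy to manage.

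Second --- and this is what the paper's one-paragraph proof rests on --- in the coloring produced inductively, each color appears \emph{exactly the same number of times} among the simplicial vertices of $B_{l-1}(n,k)$. This is itself a one-line induction: it holds for $B_1$ (one simplicial vertex per color), and if the simplicial vertices of $B_{l-1}$ carry each color exactly $s$ times, then each new pendant clique at a vertex $w$ contributes one simplicial vertex of every color except $f_c(w)$, so color $j$ appears on the simplicial vertices of $B_l$ exactly $(k-1)\cdot|\{w\in S: f_c(w)\neq j\}|=(k-1)(n-1)s$ times, independent of $j$. With all $s_i$ equal, your own increment formula $(k-1)(|S|-s_i)$ is constant in $i$, so the coloring stays strongly equitable (every class of size exactly $|V|/n$) at every stage, and there is nothing to repair. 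No Ferrers matrices, no exchange arguments, no recoloring are needed; the symmetry of the construction does all the work.
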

\begin{proof}
We start from $\omega(G)$-coloring of $B_1(n,k)$. Then we extend the coloring for vertices from next levels, for $B_2(n,k)$, $B_3(n,k)$, $\ldots$, $B_l(n,k)$. The $\omega(G)$-coloring is determined (due to permutations of colors) and it is easy to show that the coloring is equitable, even strongly equitable. This means, each color is used exactly the same number of times.
\end{proof}

\begin{corollary}
There is infinite family of block graphs for which we have arbitrarily large difference $\alpha(G)-\alpha_{\min}(G)$ and such graphs demand $\omega(G)$ colors to be equitably colored.
\end{corollary}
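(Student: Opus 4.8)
The plan is to exhibit the family explicitly as $\{B_l(n,k)\}$ for a fixed $l\geq 2$ and fixed $n\geq 3$ (say $n=3$), letting the parameter $k$ range over all integers $k\geq 3$. First I would argue that this is genuinely an infinite family of pairwise non-isomorphic block graphs: since each cut vertex of $B_l(n,k)$ lies in exactly $k$ blocks, graphs with distinct values of $k$ cannot be isomorphic, and for $l\geq 2$ the graph is nontrivial (it properly contains $B_1(n,k)=K_n$ together with pendant cliques).

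Next I would invoke the two lemmas already established. The lemma on $\alpha(G)-\alpha_{\min}(G)$ gives $\alpha(B_l(n,k))-\alpha_{\min}(B_l(n,k))=k-2$ for every $l\geq 2$, so along the chosen family this difference equals $k-2$ and hence is arbitrarily large. The lemma computing $\max\{\omega(G),\lceil (|V(G)|+1)/(\alpha_{\min}(G)+1)\rceil\}$ shows this maximum equals $\omega(G)=n$ for all $B_l(n,k)$.

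Finally I would combine the general lower bound (\ref{lower_bound}), which yields $\chi_=(G)\geq \max\{\omega(G),\lceil (|V(G)|+1)/(\alpha_{\min}(G)+1)\rceil\}=\omega(G)$, with Theorem~\ref{thm:sym}, which gives an equitable $\omega(G)$-coloring and hence $\chi_=(G)\leq\omega(G)$. Together these force $\chi_=(G)=\omega(G)$ for every member of the family, so each such graph indeed demands exactly $\omega(G)$ colors to be equitably colored, while the gap $\alpha(G)-\alpha_{\min}(G)=k-2$ grows without bound.

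There is no real obstacle here: the statement is a packaging of the preceding lemmas and Theorem~\ref{thm:sym}. The only point requiring a sentence of care is the restriction $l\geq 2$ (the $\alpha-\alpha_{\min}$ lemma fails for $l=1$, where the graph is a clique with difference $0$), and the observation that varying $k$ really does produce infinitely many distinct graphs rather than, say, isomorphic copies.
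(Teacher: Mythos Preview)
Your proposal is correct and follows exactly the route the paper intends: the corollary is stated without proof immediately after Theorem~\ref{thm:sym}, and is meant to be read as an immediate consequence of Lemma~3.1 (giving $\alpha-\alpha_{\min}=k-2$) together with Theorem~\ref{thm:sym} (giving the equitable $\omega(G)$-coloring). One minor remark: you do not actually need Lemma~3.2 or the full lower bound~(\ref{lower_bound}) to conclude $\chi_=(G)\geq\omega(G)$, since any proper coloring already requires $\omega(G)$ colors; your argument is still correct, just slightly more than necessary.
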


\noindent Remember that we have given an example of infinite family of block graphs defined by the condition: $\alpha(G)=\alpha_{\min}(G)$ also demanding $\omega(G)$ colors to be equitably colored, in Section \ref{sec:well}.

In conclusion, we have
\begin{corollary}
Conjecture \ref{conj:gap1} holds for $B_l(n,k)$ graphs.
\end{corollary}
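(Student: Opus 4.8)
The plan is to observe that this corollary is an immediate consequence of the three preceding results, combined with the universal lower bound~(\ref{lower_bound}). Concretely, for a graph $G=B_l(n,k)$ write $L(G):=\max \left\{\omega(G), \left \lceil \frac{|V(G)|+1}{\alpha_{min}(G)+1}\right \rceil\right\}$ for the common lower expression appearing in Conjecture~\ref{conj:gap1}. The inequality $L(G)\leq \chi_{=}(G)$ holds for every graph whatsoever, by~(\ref{lower_bound}), so the left-hand side of the conjectured chain is automatic and needs no further argument in this class.

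For the upper bound I would invoke Theorem~\ref{thm:sym}, which produces an equitable $\omega(G)$-coloring of $G=B_l(n,k)$ for every $l\geq 1$; this gives $\chi_{=}(G)\leq \omega(G)$. It remains only to identify $\omega(G)$ with $L(G)$, and this is exactly the content of the Lemma proved just above Theorem~\ref{thm:sym}, namely $L(G)=\omega(G)$ for every $B_l(n,k)$. Substituting, we get $\chi_{=}(G)\leq \omega(G)=L(G)\leq 1+L(G)$, so the right-hand inequality of Conjecture~\ref{conj:gap1} holds (with room to spare).

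Putting the two halves together yields $L(G)\leq \chi_{=}(G)\leq 1+L(G)$, which is precisely the assertion of Conjecture~\ref{conj:gap1} for $G=B_l(n,k)$; in fact the sharper equality $\chi_{=}(G)=\omega(G)=L(G)$ holds. Since none of the cited statements imposes any restriction on $l,n,k$ beyond the standing assumptions $n,k\geq 3$ already used in their proofs (the case $\omega(G)=n\le 2$ being a path or forest, already covered), the conclusion is uniform over the whole family.

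There is essentially no technical obstacle here: the corollary is a bookkeeping step that merely records the combination of Theorem~\ref{thm:sym}, the Lemma equating $L(G)$ with $\omega(G)$, and the generic lower bound~(\ref{lower_bound}). The only point that warrants a moment's care is making sure the Lemma's hypothesis ($l\ge 1$, and the $n\ge 3$, $k\ge 3$ reductions it relies on) matches the range of $l,n,k$ for which Theorem~\ref{thm:sym} was proved, so that the two bounds apply to the same graphs; once that is checked, the proof is a one-line substitution.
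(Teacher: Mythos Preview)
Your proposal is correct and matches the paper's approach exactly: the paper presents this corollary without a separate proof, writing simply ``In conclusion, we have'' after establishing the Lemma $L(G)=\omega(G)$ and Theorem~\ref{thm:sym}, which together with the universal lower bound~(\ref{lower_bound}) immediately give $\chi_=(G)=\omega(G)=L(G)$.
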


Taking into account our consideration on the  equitable chromatic spectrum for well-covered block graphs in Section \ref{sec:well}, we may state an open question for $B_l(n,k)$ graphs.

\begin{question}
Is it true that graph $G$, $G=B_l(n,k)$, is equitably $k$-colorable for every $k \geq \omega(G)$? In other words: Is the equitable chromatic spectrum gap-free for $B_l(n,k)$ graphs?
\end{question}

\subsection{Class $\mathcal{B}(3,\leq 3)$}

In the case of symmetric block graphs and block graphs $B_l(n,k)$ the number of cliques to which a cut-vertex belongs was determined. Now, we weaken this condition and define the subclass of block graphs, denoted by $\mathcal{B}(n, \leq k)$, as the subclass of block graphs such that every block is a clique of size $n$ and every cut vertex belongs to at most $k$ blocks.
In this subsection, we confirm Conjecture \ref{conj:gap1} for graphs from the class $\mathcal{B}(3,\leq 3)$. Observe that $\omega(G)=3$ for $G\in \mathcal{B}(3,\leq 3)$. We show that each such graph $G$ admits an equitable $4$-coloring. We need some auxiliary definitions.

\begin{definition}
A graph $G=(V,E)$ (not necessarily block graph) with designated vertex $v \in V(G)$ is of type:
\begin{itemize}
    \item[\emph{(T1)}] if $G$ is equitably $4$-colorable with colors: $A$, $B$, $C$, and $D$ that occur $m+1, m, m,$ and $m$ times, respectively, and vertex $v$ is colored with $A$;
    \item[\emph{(T2)}] if $G$ is equitably $4$-colorable with colors: $A$, $B$, $C$, and $D$ that occur $m+1, m+1, m+1,$ and $m$ times, respectively, and vertex $v$ is colored with $A$.
\end{itemize}

\end{definition}

\noindent Observe that every graph from the class $\mathcal{B}(n,\leq k)$ includes a vertex of degree $n-1$. 

\begin{lemma}
A connected graph $G$, $G\in \mathcal{B}(3,\leq 3)$, with any vertex $v \in V(G)$ of degree $2$ is of type \emph{(T1)} or \emph{(T2)}.
\end{lemma}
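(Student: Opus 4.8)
The plan is to argue by induction on the number $b$ of blocks of $G$. First I would record a parity fact that pins down which of the two types must be produced: since $G$ is connected and every block is a triangle, building $G$ up block by block adds exactly two new vertices per block, so $|V(G)|=2b+1$ is odd. As the type-\emph{(T1)} profile has $4m+1$ vertices and the type-\emph{(T2)} profile has $4m+3$ vertices, the target is forced: when $b$ is even (equivalently $|V(G)|\equiv 1\pmod 4$) I must show $G$ is of type \emph{(T1)} with $v$ in the unique large class, and when $b$ is odd I must show $G$ is of type \emph{(T2)}. The base case $b=1$ is immediate: $G$ is a single triangle, so colour $v$ with $A$ and the other two vertices with $B$ and $C$, leaving $D$ empty; this is type \emph{(T2)}.

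For the inductive step assume $b\ge 2$. Since $v$ has degree $2$ it is simplicial and lies in a unique triangle $T_v$; root the block--cut tree of $G$ at $T_v$ and let $B_1$ be a leaf block of this rooted tree, so $B_1\neq T_v$. Let $c$ be the (unique) cut vertex in $B_1$ and $a_1,b_1$ its two simplicial vertices, and set $G':=G-\{a_1,b_1\}$. Then $G'$ is connected, still lies in $\mathcal{B}(3,\le 3)$ (removing a whole leaf triangle keeps all blocks triangles and only decreases how many blocks each cut vertex meets), still has $v$ as a degree-$2$ vertex, and has $b-1$ blocks. Hence by the induction hypothesis, combined with the parity fact applied to $G'$, it carries a type-\emph{(T1)} colouring if $b-1$ is even and a type-\emph{(T2)} colouring if $b-1$ is odd; in either case $v$ is coloured $A$.

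It then remains to extend such a colouring of $G'$ across $a_1,b_1$ (which are adjacent only to each other and to $c$). Let $\gamma$ be the colour of $c$. If $G'$ is type \emph{(T1)} (so $G$ must become type \emph{(T2)}), pick two distinct colours for $a_1,b_1$ from $\{B,C,D\}\setminus\{\gamma\}$, a set of size $2$ or $3$; this raises two of $B,C,D$ to size $m+1$, leaves $A$ at $m+1$, and gives exactly a type-\emph{(T2)} colouring of $G$ with $v$ still coloured $A$. If $G'$ is type \emph{(T2)} (so $G$ must become type \emph{(T1)}), the target profile $(m+2,m+1,m+1,m+1)$ with $v$'s class uniquely largest forces one of $a_1,b_1$ to receive $A$ and the other to receive the deficient colour $D$; this succeeds immediately as long as $\gamma\notin\{A,D\}$.

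The crux — and the step I expect to be the main obstacle — is the leftover case: $G'$ is type \emph{(T2)} and $c$ is coloured $A$ or $D$. A direct extension is impossible, so one must first re-colour $G'$ to another type-\emph{(T2)} colouring with $v$ coloured $A$ in which $c$ receives one of the two surplus colours other than $A$. I would obtain this by a local, Kempe-chain-style recolouring inside $G'$, exploiting that the deficient class is strictly smaller than the other three, and splitting into subcases according to whether $c$ is adjacent to $v$ (which already rules out $\gamma=A$) and to how many blocks $c$ lies in (so that, when $c$ has become simplicial in $G'$, one recolours inside its unique remaining triangle). It may be cleanest to bake the required freedom into a strengthened induction hypothesis — asserting that the type-\emph{(T2)} colourings can be chosen so that a prescribed degree-$2$ vertex distinct from $v$ avoids a prescribed colour — and then checking that this strengthening still survives the leaf-peeling step; verifying exactly that is where the real bookkeeping lies.
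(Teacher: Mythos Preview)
Your approach is genuinely different from the paper's and, as written, has a real gap. The paper does not peel a distant leaf block; it decomposes $G$ around the triangle $T_v=\{v,a,b\}$ containing $v$. It splits what hangs off $a$ and off $b$ into at most two pieces each, obtaining up to four subgraphs $G_1,G_2,G_3,G_4$ (with $a$ or $b$ as designated vertex; a piece may degenerate to $K_1$, which is declared type~(T1)). Induction is applied to each $G_i$ with the \emph{interface} vertex $a$ or $b$ as the designated one, so that vertex is guaranteed to carry colour $A$ in its piece. The authors then permute colour names independently in each $G_i$ and glue; a six-case analysis over the $2^4$ type patterns (up to symmetry) shows the result is always (T1) or (T2) with $v$ coloured $A$. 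The point is that by making the gluing vertex the designated vertex in each subproblem, they retain full freedom to rename colours in every piece, and no obstruction arises.

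Your leaf-peeling loses exactly this control: you apply the hypothesis to $G'$ with $v$ designated, so you know the colour of $v$ but not of $c$. The case you single out --- $G'$ of type (T2) with $c$ coloured $A$ or the deficient colour --- is the whole difficulty, and you do not resolve it. Your two proposed remedies each have concrete problems. The strengthened hypothesis speaks of a ``prescribed degree-$2$ vertex distinct from $v$'', but $c$ need not have degree $2$ in $G'$ (if $c$ lay in three blocks of $G$ it has degree $4$ after peeling); moreover you need $c$ to avoid \emph{two} colours ($A$ and the deficient one), not one. The Kempe-chain idea is also not free: in these graphs a two-colour component is a tree of maximum degree up to $3$, not a path, so swapping can change the two class sizes by more than one (a star $K_{1,3}$ already shifts them by $2$) and wreck the (T2) profile; and any swap involving colour $A$ risks recolouring $v$. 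You would need to argue that a suitable component always exists, which you have not done. The paper's decomposition around $v$ sidesteps all of this.
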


\begin{proof}
We will prove this lemma by induction on the order of graph $G$. First, note that $K_1$ (one isolated vertex $v$) is of type (T1) with $v$, and the triangle, with any of its vertices, is of type (T2).

Now, consider a graph $G$ on $n>3$ vertices, $G\in\mathcal{B}(3,\leq 3)$, and assume that lemma is true for graphs from $\mathcal{B}(3,\leq 3)$ of order smaller than $n$. Let $v$ be a vertex in $G$ of degree 2. It belongs only to one block $K_3$ and it has two neighbors, let us say $a$ and $b$. The vertex $a$ (resp. $b$) belongs to at most 3 blocks, thus it may be included in two other triangles. 
Let us consider the connected subgraph of $G-\{v,b\}$ that contains $a$. We name it by $G_a$. Now, let us consider $G_a-a$. It consists of at most two connected components: $G_{a_1}$ and $G_{a_2}$. Let $G_1$ (resp. $G_2$) be $G_{a_1}$ (resp. $G_{a_2}$) after adding vertex $a$  and restoring the original edges between $a$ and vertices of $G_{a_1}$ (resp. $G_{a_2})$. If $G_a-a$ consists of less than two components, then an appropriate graph $G_1$ or/and $G_2$ is a single vertex $a$ (cf. Fig.~\ref{fig::grG}). Similarly, we define subgraphs $G_3$ and $G_4$ for vertex $b$ (cf. Fig.~\ref{fig::grG}). 
Every graph $G_i$ is $K_1$ or it belongs to $\mathcal{B}(3,\leq 3)$ with $|V(G_i)|<n$, $1 \leq i \leq 4$. In both cases, every graph $G_i$ is of type (T1) or (T2). 

We have six possibilities, up to an isomorphism:
\begin{itemize}
    \item 
        $G_1$ with vertex $a$ is of type (T1), 
        $G_2$ with vertex $a$ is of type (T1), 
        $G_3$ with vertex $b$ is of type (T1), 
        $G_4$ with vertex $b$ is of type (T1). 
        For colors in $G_1$ and $G_2$ we use the permutation $(AB)(C)(D)$. For colors in $G_3$ and $G_4$ we use $(AC)(B)(D)$. After we color vertex $v$ in color $A$, the graph $G$ is of type (T2).
    \item 
        $G_1$ with vertex $a$ is of type (T1), 
        $G_2$ with vertex $a$ is of type (T1), 
        $G_3$ with vertex $b$ is of type (T1), 
        $G_4$ with vertex $b$ is of type (T2). 
        For colors in $G_1$ and $G_2$ we use the permutation $(AD)(B)(C)$. For colors in $G_3$ and $G_4$ we use $(AB)(C)(D)$. After we color vertex $v$ in color $A$, the graph $G$ is of type (T1).
    \item 
        $G_1$ with vertex $a$ is of type (T1), 
        $G_2$ with vertex $a$ is of type (T1), 
        $G_3$ with vertex $b$ is of type (T2), 
        $G_4$ with vertex $b$ is of type (T2). 
        For colors in $G_1$ and in $G_2$ we use the permutation $(AB)(C)(D)$. For colors in $G_3$ we use $(ADB)(C)$,
        and in $G_4$ we use $(AD)(B)(C)$. After we color vertex $v$ in color $A$, the graph $G$ is of type (T2).
    \item 
        $G_1$ with vertex $a$ is of type (T1), 
        $G_2$ with vertex $a$ is of type (T2), 
        $G_3$ with vertex $b$ is of type (T1), 
        $G_4$ with vertex $b$ is of type (T2). 
        For colors in $G_1$ we use the permutation $(AB)(C)(D)$
        and in $G_2$ we use the permutation $(ABD)(C)$. For colors in $G_3$ and $G_4$ we use $(AC)(B)(D)$. After we color vertex $v$ in color $A$, the graph $G$ is of type (T2).
    \item 
        $G_1$ with vertex $a$ is of type (T1), 
        $G_2$ with vertex $a$ is of type (T2), 
        $G_3$ with vertex $b$ is of type (T2), 
        $G_4$ with vertex $b$ is of type (T2). 
        For colors in $G_1$ and $G_2$ we use the permutation $(AC)(B)(D)$. For colors in $G_3$ we use $(ABD)(C)$, and in $G_4$ we use $(ABDC)$. After we color vertex $v$ in color $A$, the graph $G$ is of type (T1).
    \item 
        $G_1$ with vertex $a$ is of type (T2), 
        $G_2$ with vertex $a$ is of type (T2), 
        $G_3$ with vertex $b$ is of type (T2), 
        $G_4$ with vertex $b$ is of type (T2). 
        For colors in $G_1$ and $G_2$ we use the permutation $(AB)(C)(D)$. For colors in $G_3$ and $G_4$ we use $(ACD)(B)$. After we color vertex $v$ in color $A$, the graph $G$ is of type (T2).
\end{itemize}

\begin{figure}
    \centering
    \includegraphics{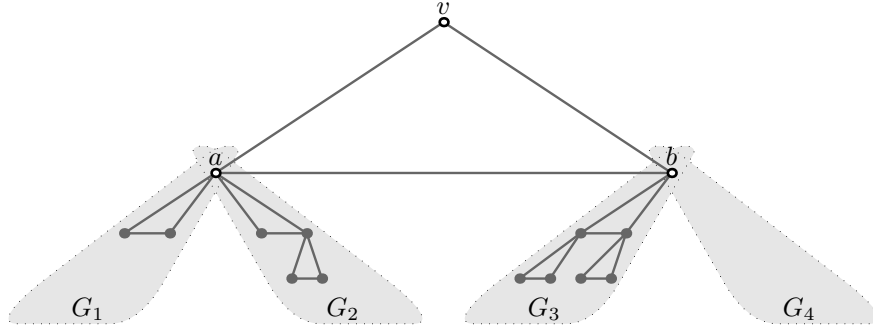}
    \caption{An example of graph $G \in \mathcal{B}(3,\leq 3)$}
    \label{fig::grG}
\end{figure}
\end{proof}

\begin{corollary}
Conjecture~\ref{conj:gap1} holds for every $G\in \mathcal{B}(3,\le3)$.
\end{corollary}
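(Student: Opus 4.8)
The plan is to read off the Corollary from the preceding lemma together with the universal lower bound (\ref{lower_bound}); the substantive work has already been done, so what remains is only a short assembly.

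First I would record that every block of a graph $G\in\mathcal{B}(3,\le3)$ is a triangle, so $\omega(G)=3$ and hence
\[
\max\left\{\omega(G),\left\lceil\frac{|V(G)|+1}{\alpha_{\min}(G)+1}\right\rceil\right\}\ \ge\ \omega(G)\ =\ 3 .
\]
The left inequality of Conjecture~\ref{conj:gap1} for $G$ is exactly (\ref{lower_bound}), so it suffices to prove the right one, and for that it is enough to show that $\chi_=(G)\le 4$: indeed, since the maximum above is at least $3$, this gives
\[
\chi_=(G)\ \le\ 4\ \le\ 1+\max\left\{\omega(G),\left\lceil\frac{|V(G)|+1}{\alpha_{\min}(G)+1}\right\rceil\right\},
\]
which is precisely the claimed gap-one bound. (In passing, together with (\ref{lower_bound}) this also forces $\lceil(|V(G)|+1)/(\alpha_{\min}(G)+1)\rceil\le 4$ for every such $G$.)

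Second, I would note that every $G\in\mathcal{B}(3,\le3)$ contains a vertex of degree $2$: if $G=K_3$ then every vertex qualifies, and otherwise the block-cut tree of $G$ has at least two leaves, i.e.\ $G$ has a pendant clique, which is a triangle containing exactly one cut vertex, so each of its two simplicial vertices has degree $2$. Fixing such a vertex $v$ and applying the preceding lemma, $G$ is of type (T1) or of type (T2); by the definitions of those types, in either case $G$ admits an equitable $4$-coloring, whence $\chi_=(G)\le 4$. Combined with the first step, this yields Conjecture~\ref{conj:gap1} for $G$.

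There is really no obstacle here: all the combinatorial content sits in the preceding lemma, and the only extra remark needed is the existence of a degree-$2$ vertex, which is immediate from the block-cut tree. (If one also wants disconnected members of $\mathcal{B}(3,\le3)$, one further uses that every connected block graph whose blocks are all triangles has an odd number of vertices, equal to $1+2b$ with $b$ the number of blocks, applies the lemma to each component, and then chooses for each component which colour carries its single surplus so that the totals over all components differ by at most one; a short counting check shows this choice is always possible. For the connected graphs to which the lemma directly applies, the first two steps already suffice.)
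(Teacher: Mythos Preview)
Your proposal is correct and follows exactly the route the paper intends: the corollary is stated without proof immediately after the lemma, and your argument spells out precisely the implicit deduction---$\omega(G)=3$, the lemma gives an equitable $4$-coloring, hence $\chi_=(G)\le 4\le 1+\max\{\omega(G),\lceil(|V|+1)/(\alpha_{\min}+1)\rceil\}$, while the lower bound is (\ref{lower_bound}). Your additional remark on the existence of a degree-$2$ vertex and your sketch for the disconnected case are not in the paper but are natural completions of the argument.
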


Note that this result cannot be extended to large values of $k$. In other words, it is not true that each $\mathcal{B}(k,\le k)$ is equitably $(k+1)$-colorable (cf. Fig. \ref{fig:4444}).

\begin{figure}[htb]
    \centering
    \includegraphics{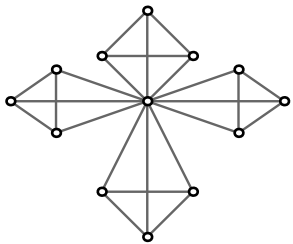}
    \caption{An example of a block graph from class $\mathcal{B}(4,\leq 4)$ demanding more than 5 colors to be equitably colored. Here $\chi_=(G)=\lceil \frac{|V(G)|+1}{\alpha_{\min}(G)+1}\rceil=7. $}
    \label{fig:4444}
\end{figure}

On Figure \ref{fig:BlockGraphMax+1}, we give an example of a block graph from $\mathcal{B}(3,\leq 3)$ that attains the upper bound in Conjecture~\ref{conj:gap1}. In this example, $|V|=11$, $\omega(G)=3$ and $\alpha_{\min}(G)=3$. Thus,
\[\max\left\{\omega(G), \left\lceil \frac{|V|+1}{\alpha_{\min}(G)+1}\right\rceil\right\}=\max\left\{3,\left\lceil \frac{12}{4}\right\rceil\right\}=3.\]
It can be easily seen that the only way to partition 11 vertices of this graph into three almost equal size is to have 4, 4 and 3 vertices in each set. Using this observation, it can be shown (with a reasoning similar to the graph from Figure \ref{fig:k=2case}) that this graph is not equitably 3-colorable. On the other hand, it is equitably 4-colorable (see the coloring on Figure \ref{fig:BlockGraphMax+1}).
\begin{figure}[h]
    \centering
    \includegraphics{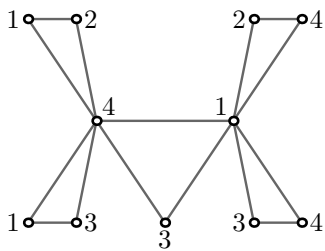}
    \caption{A graph from the class $\mathcal{B}(3,\leq 3)$ attaining the upper bound in Conjecture~\ref{conj:gap1}.}
    \label{fig:BlockGraphMax+1}       
\end{figure}

\section{Block-graphs with small $\alpha_{\min}$}\label{small}

In this section, we prove our conjecture for block-graphs with small value of $\alpha_{\min}$. We start with $\alpha_{\min}(G)=1$. Note that any graph $G$ with $\alpha_{\min}(G)=1$ has one vertex $v$ adjacent to any other vertex of $G-v$. In the case of block graphs, a graph consists of cliques $Q_1,\ldots, Q_t$, $t \geq 1$, sharing one common vertex.
Let $n_i$ denote the size of the clique $Q_i$, $1 \leq i \leq t$. Assume that cliques are ordered according to their sizes, in a non-decreasing way, i.e. $n_1\leq n_2 \leq \cdots \leq n_t$.
We denote such a block graph by $B_{n_1,n_2,\ldots,n_t}$, $t \geq 1$. 
Let us think about
equitable coloring of $G$, $G=B_{n_1,n_2,\ldots,n_t}$, with the smallest number of colors. 
The color assigned to the universal vertex $v$ of $G$ cannot be used to color any other vertex in $G$. 
Since coloring must be equitable, every other color in this coloring can be used at most twice. 
Thus, we need to partition the vertex set of $G$ into minimum number of color classes of size at most 2. 
This problem is equivalent to finding a maximum matching in the complement of $G-v$. Note that $\overline{G-v}$ is a complete multipartite graph $K_{n_1-1,\ldots,n_t-1}$. The maximum matching problem in complete multipartite graphs was considered in \cite{multi}. Due to the results given in the paper, we have the following.

\begin{theorem}[Thm. 1 in \cite{multi}, original notations]
Let $K_{m_1,m_2,\ldots,m_n}$ be a complete multipartite graph with $m_i$ vertices in the $i$th part, labeled so that $m_1 \leq m_2 \leq \ldots \leq m_n$. If $m_n \geq m_1+m_2+\cdots+m_{n-1}$,
then:
\begin{enumerate}
    \item[(i)] the number of edges in any maximum matching is $M = m_1+m_2+\cdots+m_{n-1}$;
    \item[(ii)]  a maximum matching can be obtained by connecting all vertices in the parts with $m_1, m_2, \ldots,$ $m_{n-1}$ vertices to vertices in the part with $m_n$ vertices.
\end{enumerate}\label{thm1a}
\end{theorem}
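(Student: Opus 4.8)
The plan is to prove the two assertions together by establishing tight upper and lower bounds on the size of a maximum matching of $K_{m_1,\ldots,m_n}$. Write $V_i$ for the $i$th part, so $|V_i|=m_i$, and set $S=V_1\cup\cdots\cup V_{n-1}$, so that $|S|=m_1+\cdots+m_{n-1}$.

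For the upper bound I would argue that every edge of any matching is incident to at least one vertex of $S$. Indeed, the only part disjoint from $S$ is $V_n$, and in a complete multipartite graph each part induces an independent set, so no edge has both endpoints in $V_n$; hence each edge has an endpoint in $S$. Since the edges of a matching are pairwise vertex-disjoint, a matching has at most $|S|=m_1+\cdots+m_{n-1}$ edges. This proves the bound in (i) and simultaneously shows that no matching can beat the construction in (ii).

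For the lower bound I would exhibit a matching of size exactly $|S|$. Fix an injection $\varphi\colon S\to V_n$, which exists because $|V_n|=m_n\ge m_1+\cdots+m_{n-1}=|S|$ by hypothesis, and take the edge set $\{\,\{s,\varphi(s)\}: s\in S\,\}$. Each such pair joins a vertex of some part $V_i$ with $i<n$ to a vertex of $V_n$, hence is a genuine edge of $K_{m_1,\ldots,m_n}$; distinct $s$ give disjoint pairs because $\varphi$ is injective and the $s$'s are distinct; so this is a matching, of size $|S|$. It is precisely the matching described in (ii), and combined with the upper bound it is maximum with $M=m_1+\cdots+m_{n-1}$.

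I do not expect a genuine obstacle here: the argument is essentially a one-line covering bound plus an explicit greedy construction. The only point that has to be used carefully is that a part of a complete multipartite graph is an independent set, and this is exactly where the hypothesis $m_n\ge m_1+\cdots+m_{n-1}$ enters — it is what guarantees that the large independent part $V_n$ is big enough to absorb all of $S$. Without that inequality the extremal structure changes (one then gets the usual $\lfloor (m_1+\cdots+m_n)/2\rfloor$ behaviour up to the König/Tutte–Berge deficiency), and the closed formula in (i) fails, so the hypothesis is essential rather than cosmetic.
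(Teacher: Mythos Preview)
Your argument is correct: the vertex-cover observation that every edge meets $S=V_1\cup\cdots\cup V_{n-1}$ gives the upper bound $M\le |S|$, and the explicit injection $\varphi\colon S\to V_n$ (available precisely because of the hypothesis $m_n\ge |S|$) realizes a matching of size $|S|$, establishing both (i) and (ii).

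Note, however, that the paper does not supply its own proof of this statement at all; it is quoted verbatim as Theorem~1 of \cite{multi} and used as a black box to analyze block graphs with $\alpha_{\min}(G)=1$. So there is no proof in the paper to compare against. Your self-contained argument is the standard one and would serve perfectly well in place of the citation.
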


Considering our block graph $G$, $G=B_{n_1,n_2,\ldots,n_t}$, such that $\overline{G-v}$ fulfills the assumption of Theorem \ref{thm1a}, which implies that in an optimal equitable coloring of $G$ we have $M$, $M=(n_1-1)+\cdots + (n_{t-1}-1)$ color classes of size 2 and $(n_{t-1})-M+1$ color classes of size 1. Thus, we have $n_t$ color classes in total. Hence, $\chi_=(G)=\omega(G)$ in this case.

\begin{theorem}[Thm. 3 in \cite{multi}, original notation]
Given any complete multipartite graph, $K_{m_1,m_2,\ldots,m_n}$ , where $m_1 \leq
m_2 \leq \ldots \leq m_n$ and $m_n < m_1 + m_2 + \cdots + m_{n-1}$, the number of edges in any maximum matching is $M = \lfloor (m_1+\cdots+m_n) /2 \rfloor$.\label{thm1b}
\end{theorem}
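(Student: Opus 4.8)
The plan is to derive this from the Tutte--Berge formula. Write $N=m_1+\cdots+m_n$ and $G=K_{m_1,\ldots,m_n}$. The inequality $M\le\lfloor N/2\rfloor$ is trivial, since a matching with $M$ edges saturates $2M\le N$ vertices; all the content is to produce a near-perfect matching. I would use that $M=\tfrac12\bigl(N-d(G)\bigr)$, where $d(G)=\max_{U\subseteq V(G)}\bigl(o(G-U)-|U|\bigr)$ and $o(H)$ is the number of odd-order components of $H$. Because the number of odd components of a graph has the same parity as its number of vertices, $o(G-U)-|U|\equiv N\pmod 2$ for every $U$, hence $d(G)\equiv N\pmod 2$; so it will suffice to prove $d(G)\le 1$, which then pins $d(G)$ to $0$ or $1$ and yields $M=\lfloor N/2\rfloor$ in both parities at once.

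To bound $d(G)$, I would fix an arbitrary $U\subseteq V(G)$ and use that $G-U$ is again complete multipartite, with parts $V_i\setminus U$, then split on how many of those parts are nonempty. If at least two survive, $G-U$ is connected, so $o(G-U)\le 1$ and $o(G-U)-|U|\le 1$. If at most one survives --- say only $V_j\setminus U$, of size $k\le m_j\le m_n$ --- then $G-U$ is an independent set, every component is a single vertex, $o(G-U)=k$ and $|U|=N-k$, so $o(G-U)-|U|=2k-N\le 2m_n-N$; and here the hypothesis $m_n<m_1+\cdots+m_{n-1}$, i.e.\ $2m_n<N$, makes this strictly negative. (A preliminary remark is that the hypothesis already forces $n\ge 3$, so $G$ itself has at least two parts and is connected, covering the $U=\emptyset$ instance of the first case.) Either way $o(G-U)-|U|\le 1$, so $d(G)\le 1$, finishing the proof.

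The main obstacle is the structural bookkeeping in the second step rather than any computation: one has to notice that the only mechanism for making $G-U$ contain many odd components is to delete enough vertices that the remainder is an independent set, and then realize that $2m_n<N$ is exactly the hypothesis that rules that configuration out. The parity observation is the small extra ingredient that turns the crude bound $d(G)\le 1$ into the precise value $\lfloor N/2\rfloor$ without any case analysis on whether $N$ is even or odd.

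I would keep a second, fully constructive argument in reserve in case a self-contained proof is preferred: the hypothesis gives $m_i\le m_n\le\lfloor N/2\rfloor$ for every $i$, and a short induction shows that any multiset of $N$ tokens whose most frequent value occurs at most $\lfloor N/2\rfloor$ times can be partitioned into $\lfloor N/2\rfloor$ pairs of distinct values (plus a single leftover token when $N$ is odd); interpreting tokens as vertices and values as part indices, each such pair is an edge of $G$ and the pairs form the desired matching. This avoids the Tutte--Berge machinery but pays for it with the boundary cases of the induction --- e.g.\ when two parts each already contain $\lfloor N/2\rfloor$ vertices --- which is precisely the fuss the deficiency argument avoids.
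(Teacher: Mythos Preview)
Your Tutte--Berge argument is correct: the case split on whether $G-U$ retains at least two nonempty parts is the right dichotomy, and the parity observation cleanly converts $d(G)\le 1$ into $M=\lfloor N/2\rfloor$.

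Note, however, that the paper does not actually prove this statement at all. It is quoted verbatim as Theorem~3 of \cite{multi} and used as a black box to analyse block graphs with $\alpha_{\min}(G)=1$. Sitton's original proof in \cite{multi} is constructive --- a direct greedy/algorithmic matching argument, much closer in spirit to your reserve induction on token multisets than to your main deficiency argument. So your Tutte--Berge route is a genuinely different and shorter approach than the cited source; the trade-off is the usual one, namely that the deficiency formula certifies the value of $M$ without exhibiting the matching, whereas Sitton's argument (and your inductive sketch) actually builds one.
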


Note, that considering our block graph $G$, $G=B_{n_1,n_2,\ldots,n_t}$, such that $\overline{G-v}$ fulfills the assumption of Theorem \ref{thm1b}, which implies that in an optimal equitable coloring of $G$ we have $\lfloor (|V(G)|-1)/2\rfloor$ color classes of size two and 1 or 2 (depending on parity of $|V(G)|-1$) color classes of size one. Thus, we have $\lceil (|V(G)|-1)/2 \rceil+1$ color classes in total. Hence, $\chi_=(G)=\lceil (|V(G)|+1)/2 \rceil$.

\begin{corollary}
Conjecture \ref{conj:gap1} is true for block graphs with $\alpha_{\min}(G)=1$.
\end{corollary}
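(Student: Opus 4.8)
The plan is to derive the corollary directly from the case analysis already carried out in the text immediately preceding it, namely the two regimes governed by Theorems~\ref{thm1a} and~\ref{thm1b}. The key observation is that a block graph $G$ with $\alpha_{\min}(G)=1$ has a universal vertex $v$, so $\chi_=(G)$ is determined by how finely $V(G)\setminus\{v\}$ can be partitioned into independent (in $G$) sets of size at most $2$, i.e.\ by a maximum matching in $\overline{G-v}=K_{n_1-1,\ldots,n_t-1}$. I would first record the trivial facts $\omega(G)=n_t$ and $\alpha_{\min}(G)=1$, so that the right-hand side of the lower bound~(\ref{lower_bound}) is $\max\{n_t,\lceil(|V(G)|+1)/2\rceil\}$.

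Next I would split into the two cases exactly as in the discussion above. In the ``unbalanced'' case $n_t-1\ge (n_1-1)+\cdots+(n_{t-1}-1)$ (Theorem~\ref{thm1a}), the text already shows $\chi_=(G)=n_t=\omega(G)$; one then checks that in this regime $\lceil(|V(G)|+1)/2\rceil\le n_t$, so the maximum in~(\ref{lower_bound}) equals $\omega(G)$ and the lower bound of Conjecture~\ref{conj:gap1} is met with equality (hence the upper bound trivially holds). In the ``balanced'' case $n_t-1<(n_1-1)+\cdots+(n_{t-1}-1)$ (Theorem~\ref{thm1b}), the text shows $\chi_=(G)=\lceil(|V(G)|+1)/2\rceil$; here one checks $\lceil(|V(G)|+1)/2\rceil\ge n_t=\omega(G)$, so again the maximum in~(\ref{lower_bound}) equals $\chi_=(G)$ and the lower bound is attained with equality, so the upper bound of Conjecture~\ref{conj:gap1} holds as well.

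In both cases $\chi_=(G)$ equals the lower bound $\max\{\omega(G),\lceil(|V(G)|+1)/(\alpha_{\min}(G)+1)\rceil\}$, so in fact the stronger statement holds that the bound in Conjecture~\ref{conj:gap1} is tight on the low end for all block graphs with $\alpha_{\min}=1$; the upper bound (lower bound plus one) then follows for free. The only genuine content to verify is the pair of inequalities comparing $n_t$ with $\lceil(|V(G)|+1)/2\rceil$ in the two regimes, which are elementary consequences of the defining inequalities of the two cases together with $|V(G)|=1+\sum_{i=1}^t(n_i-1)$. I expect this comparison to be the one place where a short computation is needed, but it is routine: in the unbalanced case $|V(G)|-1=\sum_{i<t}(n_i-1)+(n_t-1)\le 2(n_t-1)$ gives $\lceil(|V(G)|+1)/2\rceil\le n_t$, and the reverse inequality in the balanced case follows symmetrically. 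No new combinatorial obstacle arises, since all the structural work has been done in establishing the values of $\chi_=(G)$ above.
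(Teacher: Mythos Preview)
Your proposal is correct and follows exactly the approach the paper intends: the corollary is stated without proof as an immediate consequence of the two cases handled by Theorems~\ref{thm1a} and~\ref{thm1b}, and you are simply making explicit the routine check that in each regime the computed value of $\chi_=(G)$ coincides with $\max\{\omega(G),\lceil(|V(G)|+1)/2\rceil\}$. The two inequalities you isolate (comparing $n_t$ with $\lceil(|V(G)|+1)/2\rceil$ via $|V(G)|-1=\sum_i(n_i-1)$) are exactly what is needed and your computations are correct.
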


Next, we give a characterization of connected block graphs that satisfy the condition  $\alpha_{\min}(G)=2$. First, we observe some properties of such graphs. 

\begin{observation}
For each $x\in V(G)$ with $\alpha(G,x)=2$ there exists a maximal independent set $I_{\min}=\{x,v\}$ such that at least one out of $x$ and $v$ is a cut vertex. \label{obscut}
\end{observation}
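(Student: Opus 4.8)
\textbf{Proof plan for Observation~\ref{obscut}.}
The plan is to take a vertex $x$ with $\alpha(G,x)=2$ and exhibit a maximal independent set of the form $\{x,v\}$ in which $x$ or $v$ is a cut vertex. First I would fix a maximum independent set $I$ containing $x$; by hypothesis $\alpha(G,x)=2$, so $I=\{x,v\}$ for some vertex $v$, and $I$ is automatically maximal (it cannot be enlarged). Thus the only thing left to prove is the cut-vertex claim: at least one of $x,v$ is a cut vertex of $G$. The key point is that $G$ is a connected block graph, so if \emph{neither} $x$ nor $v$ is a cut vertex, then both are simplicial vertices; I would then derive a contradiction with $\alpha(G,x)=2$, i.e.\ construct an independent set of size $3$ through $x$.

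The main step is this contradiction argument. Assume both $x$ and $v$ are simplicial. Since $G$ is connected and $x,v$ are non-adjacent, there is a shortest $x$--$v$ path $P$ in $G$ of length at least $2$; let $u$ be the neighbour of $x$ on $P$ and let the block containing the edge $xu$ be $Q$. Because $x$ is simplicial, $Q$ is the unique block at $x$, and $u$ is a cut vertex (it lies on the path leading out of $Q$ toward $v$, which is outside $Q$). Now I would walk from $u$ along $P$ away from $x$; since $v\notin Q$, the path leaves $Q$ through $u$, so there is a block $Q'\neq Q$ at $u$ containing the next edge of $P$. Using the tree-like block structure, one can pick a vertex $w$ in (or beyond) $Q'$ that is non-adjacent to $x$: indeed any vertex not in $Q$ is non-adjacent to the simplicial vertex $x$, because $x$'s only neighbours lie in $Q$. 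Choosing $w$ non-adjacent also to $v$ — for instance by continuing along $P$ toward $v$ and stopping at a vertex still at distance $\ge 2$ from $v$, or by descending into a pendant block hanging off the $u$--$v$ portion of $P$ — yields an independent set $\{x,v,w\}$, contradicting $\alpha(G,x)=2$. Hence at least one of $x,v$ is a cut vertex, and $I_{\min}=\{x,v\}$ is the desired set.

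The step I expect to be the main obstacle is the clean selection of the third vertex $w$ so that it is simultaneously non-adjacent to $x$ and to $v$; the non-adjacency to $x$ is immediate from simpliciality, but non-adjacency to $v$ requires a small case analysis on where $v$ sits relative to the block $Q'$. The cleanest way around this is to exploit that $v$ is \emph{also} assumed simplicial: let $R$ be the unique block at $v$ and $y$ its cut vertex toward $x$; then any vertex lying in neither $Q$ nor $R$ is non-adjacent to both $x$ and $v$, and such a vertex exists whenever $Q\neq R$ (which holds since $x\not\sim v$), because the block graph is connected with at least three vertices and the blocks $Q,R$ cannot exhaust it along a path of length $\ge 2$. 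This gives the contradiction and completes the proof.
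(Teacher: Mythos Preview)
Your argument has a genuine gap: you are trying to prove the wrong statement. The observation only claims the \emph{existence} of a maximal independent set $\{x,v\}$ with one of $x,v$ a cut vertex; it does \emph{not} claim that every maximal independent set through $x$ has this property. Your contradiction argument, however, is aimed at the latter, stronger claim, and that claim is simply false. Take $G=P_4$, the path $x\text{--}a\text{--}b\text{--}v$. Here $\alpha_{\min}(G)=2$, $\alpha(G,x)=2$, both $x$ and $v$ are simplicial, and $\{x,v\}$ is a maximal independent set. Your proposed ``third vertex $w$ lying in neither $Q$ nor $R$'' does not exist: $Q=\{x,a\}$, $R=\{b,v\}$, and every vertex of $G$ lies in one of them. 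So no contradiction can be derived from the assumption that both members of a \emph{particular} maximal independent set are simplicial.

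The paper's route is different and avoids this trap: it does not seek a contradiction at all. Starting from any maximal independent set $\{x,y\}$ with both $x$ and $y$ simplicial (and $Q^x\cap Q^y=\emptyset$, which follows from maximality), one \emph{replaces} $y$ by the cut vertex $u$ of $Q^y$ lying on the $x$--$y$ path. Since $x$ is simplicial in $Q^x$ and $u\notin Q^x$, the pair $\{x,u\}$ is independent; and because $\alpha(G,x)=2$, it is automatically maximal. This produces the required set containing $x$ with a cut-vertex partner. Your machinery (the blocks $Q$ and $R$ and their cut vertices toward the other endpoint) is exactly what is needed for this replacement step; you just have to redirect it from ``find a third vertex'' to ``swap $v$ for the cut vertex $y$ of $R$.''
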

\begin{proof}
If $x$ is a cut vertex, we are done. So let us assume that $x$ is a simplicial vertex that belongs to a clique $Q^x$. Since $\alpha(G,x)=2$, the second vertex, let it be $y$, from $I_{\min}$ including $x$, must belong to another clique, $y \in Q^y$. Moreover,   $Q^x \cap Q^y = \emptyset$. Otherwise the maximal independent set $\{x,y\}$ could be replaced by the vertex from $Q^x \cap Q^y$, a contradiction with $\alpha_{\min}(G)=2$. If $y$ is a cut vertex, we are done. So, let us assume that $x$ and $y$ are simplicial vertices.
But then one of them, let us say $x$, can be replaced in $I_{\min}$ by the cut vertex belonging to $Q^x$ and lying on the path $x-y$, what finishes the proof.
\end{proof}

\begin{observation}
Let $I_{\min}=\{x,y\}$ be a maximal independent set in $G$, $x \in Q^x$, $y\in Q^y$, and let $x$ be a cut vertex. Then only one of the following conditions hold: (1) There is exactly one clique, let us say $Q^z$, on the path $x-y$, different from $Q^x$ and $Q^y$. (2) There are exactly two cliques on the path $x-y$, different from $Q^x$ and $Q^y$, but the clique intersecting non empty clique $Q^y$ is $K_2$. 

\noindent Any other situation according the cliques  on the path $x-y$ is impossible.
\label{obsk2}
\end{observation}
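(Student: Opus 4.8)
The plan is to use essentially one fact: the maximality of $\{x,y\}$, which says that no vertex of $G$ is non-adjacent to both $x$ and $y$ (otherwise it would extend $\{x,y\}$ to an independent triple); equivalently, every vertex of $G$ lies within distance $1$ of $x$ or of $y$. Each configuration the statement forbids will be eliminated by exhibiting a vertex $z$ with $d(z,x)\ge 2$ and $d(z,y)\ge 2$. For the setup I would fix a shortest $x$--$y$ path $x=w_0,w_1,\dots,w_d=y$; the blocks $B_1,\dots,B_d$ carrying its consecutive edges are pairwise distinct (two consecutive edges of a shortest path cannot lie in one clique), with $B_i\cap B_{i+1}=\{w_i\}$ and $B_i\cap B_j=\emptyset$ for $|i-j|\ge 2$, and $d\ge 2$ since $x\not\sim y$. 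Since $x$ is a cut vertex it lies in a block $Q^x\ne B_1$; this block meets the path only in $x$, i.e.\ it lies off the path, and it is the block meant by $Q^x$ in the statement (the role of $Q^y$ is fixed analogously, with $Q^y=B_d$ in the principal case). Under these conventions the blocks on the $x$--$y$ path other than $Q^x$ and $Q^y$ are exactly $B_1,\dots,B_{d-1}$, a total of $d-1$.

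Next I would bound $d$. If $d\ge 4$, the vertex $w_2$ satisfies $d(x,w_2)=2$ and $d(y,w_2)=d-2\ge 2$, so $\{x,y,w_2\}$ is independent, contradicting maximality. Hence $d\le 3$: the $x$--$y$ path has length $2$ or $3$, so there is either exactly one block on it besides $Q^x$ and $Q^y$, or exactly two --- precisely the dichotomy (1) versus (2).

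It then remains to establish the $K_2$-restriction in case $d=3$ and to rule out everything the statement calls impossible. If $d=3$, the unique block on the path meeting $Q^y$ is $B_2$; suppose $|B_2|\ge 3$ and pick $v\in B_2\setminus\{w_1,w_2\}$. The only route from $x$ into $B_2$ passes through $w_1$, and the only route from $y$ into $B_2$ passes through $w_2$, so $d(x,v)=d(y,v)=2$ and $\{x,y,v\}$ is independent --- contradiction. Hence $B_2=K_2$. The same one-line move disposes of every remaining bad situation: a vertex lying in some further block attached at a cut vertex $w_i$ ($1\le i\le d-1$) of the path, or the non-cut vertex of any path block of size $\ge 3$, is at distance $\ge 2$ from both $x$ and $y$. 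What survives is exactly case (1) (when $d=2$) and case (2) (when $d=3$, necessarily with $B_2=K_2$); since these differ in the path length they are mutually exclusive, so precisely one holds.

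The one genuinely delicate point is the bookkeeping at the two ends --- which block is named $Q^x$, and which $Q^y$, when $x$ (and possibly $y$) is a cut vertex --- so that the count of path blocks other than $Q^x$ and $Q^y$ comes out as $1$ or $2$ on the nose; and one should check, case by case, that in the surviving configurations no pendant block hanging along the path hides a vertex far from both $x$ and $y$. Once these conventions are fixed, every remaining step is the same short distance computation, so I expect the rest to be routine.
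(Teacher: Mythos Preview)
Your proof is correct and follows essentially the same approach as the paper: both arguments eliminate the forbidden configurations by exhibiting a vertex non-adjacent to both $x$ and $y$, contradicting the maximality of $\{x,y\}$. Your setup via the shortest $x$--$y$ path and explicit distance computations is more careful than the paper's terse version (which loosely speaks of adding ``simplicial vertices'' of an intermediate clique), but the underlying idea is identical.
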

\begin{proof}
It is easy to observe that there must exist at least one clique on the path $x-y$ different from $Q^x$ and $Q^y$. We have already given the reason in the proof of Observation \ref{obscut}. 
Let us consider the situation when we have exactly one clique, different from $Q^x$ and $Q^y$, on the path $x-y$. If the clique were of size greater than two, we could add any of its simplicial vertices to the independent set $I_{\min}$ - contradiction with $I_{\min}$ is maximal.
If we had two or more cliques on the path $x-y$, we also would increase the size of $I_{\min}$.
\end{proof}
\begin{observation}
Let $I_{\min}=\{x,y\}$ be a maximal independent set in $G$, $x \in Q^x$, $y\in Q^y$, and let $x$ be a cut vertex. There is no cut vertex in $Q^x$ other than $x$.
\end{observation}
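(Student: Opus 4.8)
The plan is to argue by contradiction: assume $Q^x$ contains a cut vertex $x'\neq x$. Since $x'$ is a cut vertex it lies in a block $Q'\neq Q^x$, and I fix a vertex $u\in Q'\setminus\{x'\}$ (possible since $|Q'|\geq 2$); note $u\neq x$, for otherwise $Q'$ and $Q^x$ would share the two vertices $x$ and $x'$ and hence coincide. The goal is to show that $\{x,y,u\}$ is independent, which contradicts the maximality of $I_{\min}=\{x,y\}$.

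First I would verify $u\not\sim x$. If some block $B$ contained both $x$ and $u$, then $Q^x$ (containing $x,x'$), $Q'$ (containing $x',u$) and $B$ (containing $x,u$) would be three blocks meeting pairwise; in a block graph three pairwise-intersecting blocks have a common vertex, which would force $x=x'=u$ --- impossible. Hence $u\not\sim x$ and $u\neq x$.

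The crux is to show $u\not\sim y$, and this is where I expect the real work. I would use the two preceding observations: $Q^x\cap Q^y=\emptyset$, and, by Observation~\ref{obsk2}, the path between $x$ and $y$ carries one or two cliques strictly between $Q^x$ and $Q^y$ (the one meeting $Q^y$ being a $K_2$ in the two-clique case). Since $Q^x$ is a block with $Q^x\cap Q^y=\emptyset$, removing $V(Q^x)$ disconnects $G$, and $y$ lies in the component glued to $Q^x$ at a single cut vertex $c\in Q^x$, while $u$ lies in the component glued at $x'$. If $c\neq x'$ these components are distinct, so $u\not\sim y$ and we are done. The delicate case is $c=x'$, i.e.\ the path to $y$ leaves $Q^x$ through $x'$. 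Then either $Q'$ is a block hanging off $x'$ that is not on the path (so $u$ again lies in a branch separated from $y$), or $Q'$ is the first clique of the path after $Q^x$; in the latter subcase I would appeal to Observation~\ref{obsk2}: because $Q^x\cap Q^y=\emptyset$ forces a genuine intermediate clique, one can choose $u\in Q'$ that either is not on the path or is still separated from $y$ by a further clique of the path, and the explicit case list of Observation~\ref{obsk2} tells exactly which vertices of the intervening clique(s) are adjacent to $y$, so that $u$ can be taken with $u\not\sim y$.

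Having $u\not\sim x$, $u\not\sim y$ and $u\notin\{x,y\}$ in every case, $\{x,y,u\}$ is an independent set strictly larger than $I_{\min}$, contradicting its maximality; hence $Q^x$ has no cut vertex other than $x$. I expect the case $c=x'$ to be the only genuine obstacle: it has to be resolved by invoking the full strength of Observation~\ref{obsk2}, in particular the restriction that in its second alternative the clique touching $Q^y$ is a $K_2$, in order to guarantee that a vertex $u$ non-adjacent to $y$ is available.
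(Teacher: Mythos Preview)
Your overall strategy---argue by contradiction and exhibit a vertex $u$ that can be appended to $\{x,y\}$---is exactly the paper's. The paper's proof is one sentence: if some clique $Q^z$ meets $Q^x$ at a vertex other than $x$ and $Q^z$ does \emph{not} lie on the $x$--$y$ path, then any vertex of $Q^z\setminus Q^x$ is non-adjacent to both $x$ and $y$, contradicting maximality of $I_{\min}$. In your language this is precisely the ``easy'' situation $c\neq x'$; under the paper's intended reading one in fact always has $c=x$, so this is the only case.

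The ``delicate case'' $c=x'$ that you try to handle is a phantom, and your attempt to resolve it via Observation~\ref{obsk2} cannot succeed. That case means $Q^x$ is the first block on the $x$--$y$ path with $x'$ its exit vertex; but then the observation, read literally, is \emph{false}. Concretely, in the graphs $B_{n_l-n_0,n_1,\dots,n_t}$ of the paper's characterization, the clique $Q_l$ contains both the centre $x$ and the cut vertex $w$ joining it to $Q_0$, while $\{x,y\}$ with $y\in Q_0\setminus\{w\}$ is maximal independent. When the intermediate clique on the path is a $K_2$, its only vertex other than $x'$ already lies in $Q^y$, so no admissible $u$ exists---your appeal to Observation~\ref{obsk2} does not rescue this. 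The paper never enters this case because it explicitly restricts to cliques $Q^z$ off the path; equivalently, $Q^x$ is implicitly one of the cliques through $x$ \emph{other than} the one carrying the path (compare how $Q^x$ functions in Observation~\ref{obsk2} and in the subsequent discussion).

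So: drop the delicate case. What remains is correct and matches the paper's argument. Your verification that $u\not\sim x$ via a Helly-type property of blocks is also more elaborate than needed; it follows immediately from the fact that the cut vertex $x'$ separates $Q^x\setminus\{x'\}$ from $Q'\setminus\{x'\}$ in $G$.
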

\begin{proof}
If there were a clique $Q^z$ such that $Q^z \cap Q^x \neq \emptyset$ and $Q^z$ does not lie on the path $x-y$, then we could add any vertex from $Q^z \backslash Q^x$ to $I_{\min}$ - a contradiction with its maximality.
\end{proof}

Let us ask whether there may exist any clique adjacent to $Q^y$ not lying on the path $x-y$. It turns out that the only feasible situation for the positive answer is when we have only one clique on the path $x-y$ different from $Q^x$ and $Q^y$ and $Q^y$ is isomorphic to $K_2$. Then we may add one clique of any size to $Q^y$ by identifying any vertex of the added clique with the simplicial vertex of $Q^y$. But it results in the situation described in Observation \ref{obsk2} point 2. So, we may assume that $Q^x$, as well as $Q^y$ has exactly one cut vertex.

It remains to consider whether there may exist any other cliques outside the path $x-y$. If we add any other cliques to the cut vertex $x$, the independent set including $x$ does not change, so such situation is feasible. Observe that we cannot add any additional clique to any other cut vertex in $G$ without increasing the value of $\alpha_{\min}(G)$. 

Summarizing, a connected block graph $G$ with $\alpha_{\min}(G)=2$ is characterized in the following way. We start from a star of cliques $B_{n_1,n_2,\ldots,n_t}$, $t\geq 2$, and then one of the following two operations is allowed:
\begin{enumerate}
    \item we add a clique $Q_0$ of size $n_0$, $n_0 \geq 2$, to any clique of  $B_{n_1,n_2,\ldots,n_t}$, let us say to $Q_l$ of size $n_l$, $l \in [t]$, by identifying any vertex of $Q_0$ with any simplicial vertex of $Q_l$; we obtain a graph $B_{n_l-n_0, n_1,\ldots, n_{l-1}, n_{l+1}, \ldots, n_t}$ (cf. Fig. \ref{fig:ex_alpha2}a));
    \item we add $K_2$ to any clique of $B_{n_1,n_2,\ldots,n_t}$, let us say to a clique $Q_l$ of size $n_l$, $l \in [t]$, by identifying a vertex of added $K_2$ with any simplicial vertex of $Q_l$, and then we add a clique $Q_0$ of size $n_0$, $n_0 \geq 2$, to pendant vertex of added earlier $K_2$, by identifying any vertex of $Q_0$ with the pendant vertex; we obtain a graph $B_{n_l-2-n_0, n_1,\ldots, n_{l-1}, n_{l+1}, \ldots, n_t}$ (cf. Fig. \ref{fig:ex_alpha2}b)).
\end{enumerate}

\begin{figure}[t]
    \centering
    a)
    \includegraphics[scale=0.7]{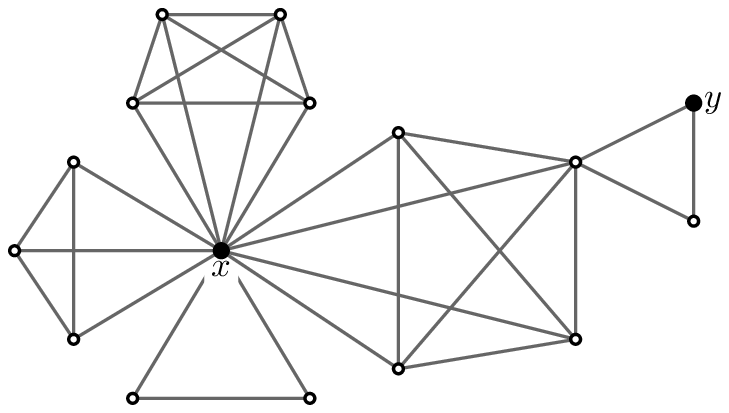} b)
    \includegraphics[scale=0.7]{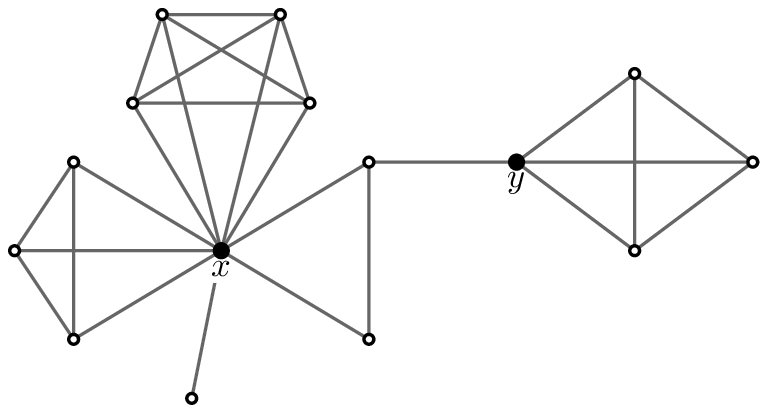}
    \caption{An example of two connected block graphs with $\alpha_{\min}=2$: a) $B_{5-3,5,4,3}$, b) $B_{3-2-4,5,4,2}$. A maximal independent set of size 2 is depicted by black vertices.}
    \label{fig:ex_alpha2}
\end{figure}

Observe that this characterization theorem works only when $G$ is a connected graph. A disconnected example that is not of the type described above would be a graph with two components such that each component is a clique.
Now, we use the characterization in order to show that all connected block-graphs with $\alpha_{\min}=2$ can be equitably colored with
$
k=\max\left\{\omega(G), \left\lceil \frac{|V(G)|+1}{\alpha_{min}(G)+1}\right\rceil\right\}$
colors. 

\begin{proposition}
    Conjecture \ref{conj:gap1} is true for every connected block graph $G$ with $\alpha_{\min}(G)=2$.
\end{proposition}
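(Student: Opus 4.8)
The plan is to use the structural characterization of connected block graphs with $\alpha_{\min}(G)=2$ established just above: every such $G$ arises from a star of cliques $B_{n_1,\dots,n_t}$ (which has $\alpha_{\min}=1$, already handled) by exactly one of the two operations (1) or (2). So after fixing $k=\max\{\omega(G),\lceil (|V(G)|+1)/3\rceil\}$, I would proceed by cases on which operation was applied and, inside each case, by comparison of $k$ with $\omega(G)$, since $k\geq\omega(G)$ always but may be strictly larger. First I would record the elementary arithmetic facts: $|V(G)|$ equals the sum of the clique sizes minus the number of identified vertices, $\alpha_{\min}(G)=2$, so each color may be used at most $3$ times and at least one color is used exactly $2$ times; hence an equitable $k$-coloring exists iff we can partition $V(G)$ into $k$ independent sets, all of size $2$ or $3$ (with the number of triples forced by $|V(G)|$ and $k$). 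Because $\alpha(G)$ may also be $2$ or larger, I must be a little careful: a triple can only consist of three mutually non-adjacent simplicial-ish vertices lying in three different pendant pieces hanging off the construction.

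The key steps, in order: (i) Handle the base object, the star $B_{n_1,\dots,n_t}$ with its universal vertex $v$: by the $\alpha_{\min}=1$ analysis (Theorems~\ref{thm1a}, \ref{thm1b}), it is equitably $\omega$- or $\lceil(|V|+1)/2\rceil$-colorable, and I would re-derive the explicit optimal partition into the color class $\{v\}$ and pairs drawn from distinct cliques $Q_i-v$. (ii) For operation (1) — gluing a clique $Q_0$ of size $n_0$ onto a simplicial vertex $u$ of some $Q_\ell$ — the new graph has two cut vertices, the center $v$ and $u$. I would extend the star's coloring: the color of $v$ is still a singleton; $u$ now carries one color; the remaining $n_0-1$ vertices of $Q_0$ are pairwise adjacent so they each need a distinct color, but I can pair each of them with a simplicial vertex from a clique on the "other side" of $v$ (they are non-adjacent to everything in $Q_0\setminus\{u\}$ and to everything not in their own $Q_i$). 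A counting argument then shows the total number of colors used matches $k$; the possible slack between the matching-number bound for $\overline{G-v}$ and the required value is absorbed exactly by the gap-one flexibility in the conjecture. (iii) For operation (2) — an intermediate $K_2$ with pendant vertex $w$ onto which $Q_0$ is glued — the argument is the same but with one extra cut vertex $w$ and one extra vertex (the other endpoint of the $K_2$) that is simplicial in that $K_2$; this lone simplicial vertex is the natural "third element" that lets me turn a would-be pair into a triple when $|V(G)|$ forces an odd number of leftover vertices. (iv) In every subcase verify that the produced partition is equitable for the prescribed $k$, i.e. all classes have size in $\{2,3\}$ and the count of size-$3$ classes is $|V(G)|-2k$.

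The main obstacle I expect is bookkeeping rather than a conceptual difficulty: one must show that the pendant/simplicial vertices available for pairing (or tripling) are never "used up" — that is, that the clique on the far side of $v$ from the newly glued piece always supplies enough non-adjacent partners, and that the parity of $|V(G)|$ together with the chosen $k$ leaves exactly the right number of triples, which are moreover realizable (three mutually independent vertices genuinely exist whenever the count demands a triple). The delicate point is the boundary between the two regimes $k=\omega(G)=3$-type values and $k=\lceil(|V|+1)/3\rceil$: near the crossover, the matching number of the complement may be one short of what a clean pairing needs, and it is precisely there that one must invoke the extra simplicial vertex from the $K_2$ in operation (2), or settle for the upper bound $k+1$ permitted by the conjecture. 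I would therefore organize step (iv) as a single lemma: "if $V(G)$ can be covered by independent sets of size $\le 3$ using at most one singleton, then $\chi_=(G)\le\max\{\omega,\lceil(|V|+1)/3\rceil\}+[\text{correction}]$," and check the correction term is $0$ or $1$ in each structural case.
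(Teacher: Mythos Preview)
Your overall architecture---use the structural characterization and split into the two operations---matches the paper's. But there is a real gap in your plan, and the paper's execution is considerably simpler than what you outline.

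\textbf{The gap.} In step (ii) you write ``the color of $v$ is still a singleton'', while in step (iv) you require ``all classes have size in $\{2,3\}$''. These are incompatible whenever $|V(G)|>2k$, which is exactly the interesting regime (the one forced when $k=\lceil(|V|+1)/3\rceil$ is the larger term). You cannot extend the $\alpha_{\min}=1$ coloring of the star, because that coloring puts the center $v$ in a class of size~$1$; once $Q_0$ is attached, $v$ must sit in a class of size~$2$, and the only partner available for it lies in the newly attached piece. So the ``extend the star's coloring'' plan breaks at the very first step, and the subsequent pairing argument never gets off the ground in the form you describe.

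\textbf{How the paper avoids this.} Rather than starting from the star's coloring, the paper starts by fixing the first color class to be $I=\{x,y\}$, where $x$ is the center of the star and $y$ is a simplicial vertex of $Q_0$ (Case~1) or the cut vertex of $Q_0$ (Case~2). This is precisely a maximum independent set through a vertex realizing $\alpha_{\min}=2$. The point is that $G-I$ is then either a disjoint union of cliques (Case~2) or a disjoint union of cliques together with one ``double clique'' $B'$ formed by $Q_\ell-x$ and $Q_0-y$ sharing one vertex (Case~1). In either case, one colors $G-I$ \emph{periodically} with the $k-1$ colors $2,\dots,k$, running through the vertices clique by clique (and through $B'$ first in Case~1). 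Properness follows from $k\ge\omega(G)$; equitability follows from $k\ge\lceil(|V|+1)/3\rceil$, which forces each of the colors $2,\dots,k$ to appear at most three times on $G-I$, while each appears at least once. Adding back the size-$2$ class $I$ then gives an equitable $k$-coloring.

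\textbf{Consequence for your correction term.} Because this works for $k=\max\{\omega(G),\lceil(|V|+1)/3\rceil\}$ itself, the lower bound in the conjecture is always attained for $\alpha_{\min}=2$; your anticipated ``$+1$'' is never needed. The delicate crossover you worry about does not materialize: the periodic scheme handles both regimes uniformly, with no matching or tripling bookkeeping required.
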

\begin{proof}
We consider two cases, dependly on the structure of $G$.
\begin{description} 
\item[Case 1] $G$ is of type $B_{n_l-n_0, n_1,\ldots, n_{l-1}, n_{l+1}, \ldots, n_t}$.

Let $x$ be the common vertex of cliques $Q_1,\ldots,Q_t$, while $y$ be any simplicial vertex of $Q_0$. It is easy to see that $\{x,y\}$ is a largest independent set including the vertex $x$. This set forms the first color class, let us say $V_1$. Observe that $G-I$ consists of cliques $Q_1-x, \ldots, Q_{l-1}-x, Q_{l+1}-x, \ldots, Q_t-x$ of sizes $n_1-1,\ldots,n_{l-1}-1, n_{l+1}-1,\ldots,n_t-1$, respectively, and a graph $B'$ being composed from $Q_l-x$ and $Q_0-y$ that share one vertex. First, we color $G-I$ starting from vertices of $B'$
    with colors $2, 3,\ldots, n_1,n_1+1,\ldots, n_1+n_0-1$, if $n_1+n_0 \leq k+1$, or with colors $2,3,\ldots, k, 2,3, \ldots, n_1+n_0-k$, otherwise. In the further case, we simply use colors $\{2,3,\ldots,k\}$ periodically. Note, that since $k \geq \omega(G)$ and $n_1+n_0 \leq 2\omega(G)-1$, the coloring of $B'$ is proper. Assume that the vertex of $Q_0-y$ colored as the last one, in the coloring of $B'$ described above, is colored with color $k'$.
    Next, we color vertices of cliques $Q_1-x, \ldots, Q_{l-1}-x, Q_{l+1}-x, \ldots, Q_t-x$, clique by clique, with colors $2,\ldots,k$ periodically, starting from color 2, if $k'=k$, or with color $k'+1$, otherwise. 
    Again, since $k \geq \omega(G)$, the coloring of $G-I$ is proper, and together with class $V_1$, we obtain a proper $k$-coloring of $G$. 
    \begin{claim}
    The coloring of $G$ is equitable.
    \end{claim}
    \begin{proof}
    Since we use colors $2,\ldots,k$ periodically, certainly the coloring of $G-I$ is equitable (taking into account only colors from $\{2,\ldots,k\}$). 
    Next, observe that each color in the coloring of $G-I$ is used at most three times. On the contrary, we may suppose that at least one color is used four times. Then, $|V(G-I)|\geq 3(k-1)+1$, what implies $|V(G)| \geq 3k$, and this in turn is a contradiction with $k \geq (|V|+1)/3$.
    \\On the other side, since $k \geq \omega(G)$, each color is used at least once.
    Thus, the equitable coloring of $G-I$ together with the color class $V_1$ of cardinality 2 forms the equitable coloring of $G$.
    \end{proof} 
\item[Case 2] $G$ is of type $B_{n_l-2-n_0, n_1,\ldots, n_{l-1}, n_{l+1}, \ldots, n_t}$

Let $x$ be the common vertex of cliques $Q_1,\ldots,Q_t$, while $y$ be now the cut vertex of $Q_0$. It is easy to see that $\{x,y\}$ is a largest independent set including the vertex $x$. This set forms the first color class, let us say $V_1$. Observe that this time $G-I$ is a union of cliques. We color vertices of cliques $Q_1-x, \ldots, Q_t-x$, and $Q_0-y$, clique by clique, with colors $2,\ldots,k$ periodically, starting from color 2.
Since $k \geq \omega(G)$, the coloring of $G-I$ is proper, and together with class $V_1$, we obtain a proper $k$-coloring of $G$. Similarly to the previous case, we can prove that the coloring of $G$ is equitable.
\end{description}
\end{proof}

\section{Conclusion and future work}
In this paper, we considered equitable colorings of block graphs, which form a proper subclass of chordal graphs. The problem is interesting in this class as it can be shown that the problem of equitable coloring of chordal graphs is equivalent to equitable edge coloring of hypertrees, an interesting and non-trivial subclass of hypergraphs. Since the problem is NP-hard in the class of chordal graphs, a natural approach is to consider its restriction to various subclasses of chordal graphs. Block graphs are such examples. 

In the paper, we considered Conjecture \ref{conj:gap1}, which was offering a bound for equitable chromatic number, such that the difference between the upper and lower bounds is at most one. Moreover, both of the bounds are computable in polynomial time. Thus, in some sense, the situation is similar to the chromatic index of graphs, where for simple graphs there is the classical theorem of Vizing and for multigraphs there is the Goldberg conjecture, where a similar gap-one bound is offered for this parameter in the class of all multigraphs (see \cite{GoldbergProof}, where a proof of the latter conjecture is announced).

In the paper, we verified our conjecture for various subclasses of block graphs. Moreover, we gave various examples of block graphs, for which both lower and upper bounds of Conjecture \ref{conj:gap1} are tight. Usually, when one considers equitable colorings, there are two parameters that one takes into account: the smallest number of colors in an equitable coloring of a graph (equitable chromatic number, $\chi_=(G)$), and the smallest $k$, such that the graph admits an equitable $t$-coloring for any $t\geq k$ (equitable chromatic threshold, $\chi_=^*(G)$). As complete bipartite graphs show, these two parameters are not always the same. However, the results obtained in this paper confirm our belief that these two parameters have to be the same in the class of block graphs, though we do not have a complete proof of this statement.

One may wonder whether the statement of Conjecture \ref{conj:gap1} can be extended to arbitrary graphs. In order to see that this extension cannot be true, consider the complete tripartite graph $G=K_{3,5,7}$. Recall that this graph can be obtained from three disjoint independent sets of size 3,5 and 7, respectively, by joining any two vertices lying in different independent sets with an edge. Observe that $|V|=15$, $\omega(G)=3$ and $\alpha_{\min}(G)=3$. Hence the upper bound from Conjecture \ref{conj:gap1} is five. However, it is not hard to see that $\chi_=(G)=6$.
What is even more interesting that the conjecture cannot be extended even into the whole class of chordal graphs. 
\begin{figure}
    \centering
    \includegraphics[scale=1]{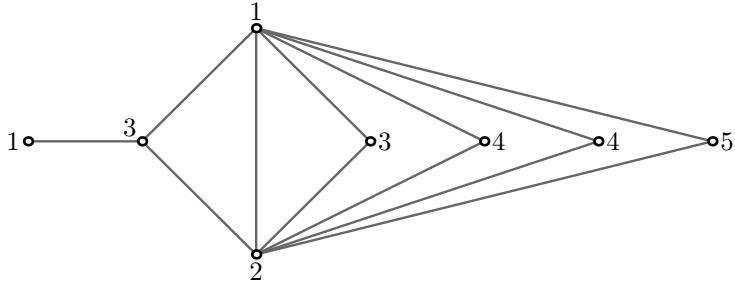}
    \caption{An example of chordal graph with its equitable coloring.}
    \label{fig:ex:chord}
\end{figure}
For the chordal graph $G$ from Figure  \ref{fig:ex:chord} we have: $|V|=8$, $\omega(G)=3$ and $\alpha_{\min}(G)=2$. Thus, the upper bound from Conjecture \ref{conj:gap1}: $1+\max \left\{\omega(G), \frac{|V(G)|+1}{\alpha_{min}(G)+1}\right\}$ is equal to 4, while we need 5 colors to color the graph equitably. Note that our exemplary graph has two vertices $v_1, v_2$, the ones of the highest degree (cf. Fig. ~\ref{ex_gen}), such that they realise $\alpha_{\min}(G)$, and when we assign color one to vertices from the largest independent set including $v_1$, the second vertex, $v_2$ forms maximal independent set, of size 1. So, we have to partition the rest of vertices into minimum number of independent sets of size at most 2. In consequence, by adding vertices of degree 2 that are adjacent to $v_1$ and $v_2$, vertices $u_1, \ldots,u_n$, we are able to create arbitrary large chordal graphs with large equitable chromatic number.
\begin{figure}
\centering
\includegraphics[scale=1]{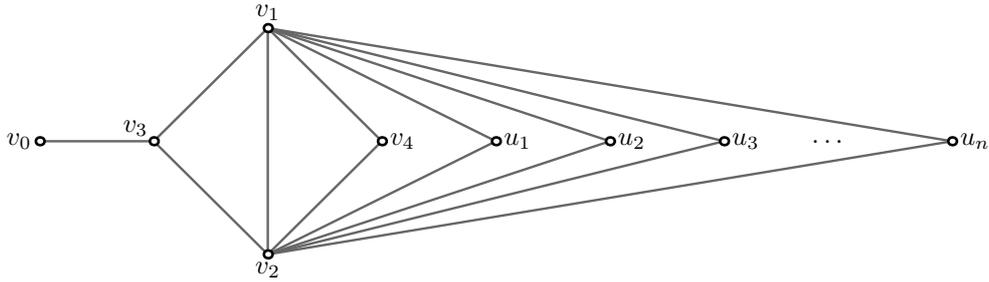}
\caption{An example of the construction of chordal graphs with unbounded value of $\chi_=(G)$.}
\label{ex_gen}
\end{figure}
Note, that $\chi_=(G)=3+\lceil n/2 \rceil$, while $\max\left\{\omega(G), \left\lceil \frac{|V(G)|+1}{\alpha_{min}(G)+1}\right \rceil\right\}+1=\lceil n/3\rceil+3$. 
Thus, the difference between $\chi_=(G)$ and the maximum from Conjecture \ref{conj:gap1} can be arbitrary large for general chordal graphs.

From our perspective, proving Conjecture \ref{conj:gap1} and the equality $\chi_=(G)=\chi_=^*(G)$ for block graphs seem promising for future work. It seems also desirable to prove inequalities from Conjecture \ref{conj:gap1} for other interesting graph classes. In other words, we would like to find other graph classes where the bounds offered by Conjecture \ref{conj:gap1} are going to hold.



\bibliographystyle{elsarticle-num}


\end{document}